\newcommand{\F}{\vspace*{\smallskipamount}}
\newcommand{\FF}{\vspace*{\medskipamount}}
\newcommand{\FFF}{\vspace*{\bigskipamount}}
\newcommand{\B}{\vspace*{-\smallskipamount}}
\newcommand{\BB}{\vspace*{-\medskipamount}}
\newcommand{\BBB}{\vspace*{-\bigskipamount}}
\newcommand{\cA}{{\mathcal A}}
\newcommand{\cE}{{\mathcal E}}
\newcommand{\cO}{{\mathcal O}}
\newcommand*\circled[1]{\tikz[baseline=(char.base)]{
            \node[shape=circle,draw,inner sep=1pt] (char) {$#1$};}}
\newcommand{\Item}{\B\item}
\newcommand{\Paragraph}[1]{\BBB\paragraph{#1}}
\newcommand{\remove}[1]{}
\newlength{\pagewidth}
\newlength{\captionwidth}
\newcommand{\qed}{\hfill $\square$ \smallbreak}
\newenvironment{proof}{\noindent{\bf Proof:}}{\qed}
\newcommand{\RB}{\raisebox{2.5ex}{~}}
\newcommand{\LB}{\raisebox{-1.5ex}{~}}
\newtheorem{theorem}{Theorem}
\newtheorem{lemma}{Lemma}
\newtheorem{proposition}{Proposition}
\begin{document}

\baselineskip 	3ex
\parskip 		1ex

\title{			Broadcasting in Ad Hoc Multiple Access Channels~\footnotemark[1]\FFF\FFF}

\author{		Lakshmi Anantharamu \footnotemark[2]	
			\and 
			Bogdan S. Chlebus \footnotemark[2]}


\date{}

\maketitle

\footnotetext[1]{This paper was published in a preliminary form as~\cite{AnantharamuC-SIROCCO13} and its final version appeared as~\cite{AnantharamuC15}.}

\footnotetext[2]{Department of Computer Science and Engineering, University of Colorado Denver, Denver, Colorado 80217, USA.
This work was supported by the NSF Grant 1016847.}

\FFF


\begin{abstract}
We study broadcast in multiple access channels in dynamic  adversarial settings.
There is an unbounded supply of anonymous stations attached to a synchronous channel.
There is an adversary who  injects packets into stations to be broadcast on the channel.
The adversary is restricted by injection rate, burstiness, and by how many passive stations can be simultaneously activated by providing them with packets.
We consider deterministic distributed broadcast algorithms, which are further categorized by their properties.
We investigate for which injection rates can algorithms attain bounded packet latency,  when adversaries are restricted to be able to activate at most one station per round.
The rates of algorithms we present make the increasing sequence consisting of $\frac{1}{3}$, $\frac{3}{8}$ and~$\frac{1}{2}$, reflecting the additional features of algorithms.
We show that injection rate~$\frac{3}{4}$ cannot be handled with bounded packet latency.

\FFF\FFF

\noindent
\textbf{Key words:}
multiple access channel, 
adversarial queuing,
distributed broadcast,
deterministic algorithm,
stability,
packet latency.
\end{abstract}

\vfill

\thispagestyle{empty}
\setcounter{page}{0}

\newpage

\section{Introduction}

\label{sec:introduction}

Multiple access channels model shared-medium networks in which a simultaneous broadcast to all users is provided by architecture. 
They are an abstraction of the networking technology of the popular implementation of local area networks by the Ethernet suite of technologies~\cite{MetcalfeB-CACM76}.

In a multiple access channel, transmissions by multiple users that overlap in time result in interference so that none can be  received successfully.
This makes it necessary either to avoid  conflict for access to the channel altogether or to have a mechanism to resolve conflict when it occurs.

We consider broadcasting in multiple-access channels in a dynamic scenario when there are many stations but only a few of them are active at any time and the stations' status of active versus passive may change. 
This corresponds to a realistic situation when most stations are idle for most of the time, while a few stations occasionally want to use the broadcast functionality of the channel.
Moreover, it is normally impossible to predict in advance which stations will need to access the channel at what times, as bursts of activity among stations do not exhibit any regular patterns.
Resolving conflicts for access to a channel can be accomplished by using randomization, as is implemented in the carrier-sense multiple access, see \cite{Keshav-book97}.

Considering deterministic algorithms and their worst-case performance requires a  methodological setting specifying worst-case bounds on how much traffic a network would need to handle.
This can be accomplished formally through suitable adversarial models of  demands on network traffic.
Adversarial models normally assume synchronous channels with stations acting in lockstep, which is the case of this work.

Another component in a specification of a broadcast system is how much knowledge about the  system can communicating agents use in their codes of algorithms.
Various approaches are possible to model multiple access channels in terms of what is known to the stations attached to the communication medium.
Historically, the first approach was to use the queue-free model, in which each injected packet is treated as if handled by an independent station without any name and no private memory for a queue.
In such a model, the number of stations is not set in any way, as stations come and go similarly as packets do; see \cite{Gallager-TIT85} for the initial work on this model, and~\cite{BenderFHKL05} for more recent one.

An alternative approach is to have a system with a fixed number~$n$ of stations, each equipped with a private memory to store packets in a queue.
An attractive feature of such fixed-size systems is that even simple randomized protocols like Aloha are stable under suitable traffic load~\cite{TsybakovM79}, while in the queue-free model the binary exponential backoff is unstable for any arrival rate~\cite{Aldous-TIT87}.

Popular assumptions used in the literature addressing distributed deterministic broadcasting stipulate that there are some $n$ stations attached to a channel and that each station is identified by its name in the interval $[0,n-1]$, with each station knowing the number~$n$ and its own name; see~\cite{AnantharamuCKR-INFOCOM10, AnantharamuCKR-SIROCCO11, AnantharamuCR-OPODIS09, ChlebusKR09, ChlebusKR-TALG12}.
Our goal is to explore deterministic broadcasting on multiple-access channels when there are many stations attached to a channel but only a few stations use it at a time.
In such circumstances, using names permanently assigned to stations by deterministic distributed algorithms may create an unnecessarily large overhead measured as packet latency and queue size.
This is because the  packet latency is expected to  be a function of the total number of stations in the system, which could be arbitrarily large; see~\cite{AnantharamuCKR-INFOCOM10, AnantharamuCKR-SIROCCO11} for such approaches.

In this paper, we consider distributed deterministic broadcasting which departs from the assumption about a fixed known size of the system.
Instead, we view the system as consisting of a very large set of stations which are not individually identified in any way.
Stations  join the activity on the channel when they want to broadcast, which needs to be coordinated with stations that are already active.
Such coordination could be associated with the medium-access control layer \cite{Keshav-book97} of a layered protocol design.

The process of activating stations is considered without assuming that there is a finite fixed set of stations attached to the channel, that their number is known to each participating station, and that each station has a unique name which it knows. 
We call such channels \emph{ad hoc} to emphasize the volatility of the system and the relative lack of knowledge of individual stations about themselves and the environment.
Ad hoc channels are a crossover between the queue-free model, with which they share the property of an unbounded supply of anonymous stations activated by injected packets, and the model of finitely many stations in a system, with which they share the property that stations use their private memories to implement private queues to store pending packets.

We measure the performance of broadcast algorithms by packet latency and queue sizes. 
These metrics reflect the constraints on packet injection imposed by the adversarial model.
Such constraints include packet injection rate, understood as the average number of packets injected in a large time interval, and burstiness, which means the maximum number of packets that can be injected simultaneously.
Adversarial models of traffic allow to study the worst-case performance of deterministic communication algorithms.


\begin{table}
\begin{center}
\begin{tabular}{|c ||c |c |c |c |c |}
\hline
\RB \LB
& $\mathsf{\rho < \frac{1}{3}}$ & $\mathsf{\rho \le \frac{3}{8}}$ & $\mathsf{\rho \le \frac{1}{2}}$ 
&$\mathsf{\frac{1}{2}<\rho < \frac{3}{4}}$&$\mathsf{\rho \ge \frac{3}{4}}$\\
\hline\hline
\RB \LB
\textsf{non-adaptive activation-based } & $\frac{3b-3}{1-3\rho}$ & ? &  && \\
\hline
\RB
\textsf{non-adaptive full-sensing }&& $2b +4$ & ? && \\
\hline
\RB \LB
\textsf{adaptive activation-based } & & &$4b-4$&& \\
\hline
\RB \LB
\textsf{adaptive full-sensing } & & &&?& impossible\\
\hline
\end{tabular}
\parbox{\pagewidth}{
~
\caption{\label{tab:collision-detection}
Some results for channels \emph{with} collision detection.
Entries that are expressions are upper bounds on packet latency in terms of type $(\rho,b)$ of a $1$-activating adversary.
Question marks indicate open questions  if deterministic algorithms with bounded packet latencies exist, satisfying the respective restrictions.
The impossibility is of existence of a deterministic algorithm with bounded packet latency.
}}
\end{center}
\end{table}

\Paragraph{Our results.}

We propose an adversarial model of traffic demands for ad hoc synchronous multiple access channels, which represents \emph{dynamic} environments in which stations freely join and leave broadcasting activity.
To make an anonymous system able to break symmetry in a deterministic manner, we restrict adversaries by allowing them only to activate at most one station per round.
This is shown sufficient to demonstrate existence of \emph{deterministic} distributed broadcast algorithms with a bounded packet latency, subject to restrictions on injection rates.
In this overview of results we refer to technical terms that are precisely defined in Section~\ref{sec:preliminaries}.

We categorize algorithms into acknowledgment based, activation based and finally general algorithms, which are called full sensing. 
Independently from that, we differentiate algorithms by the property if they use control bits in messages or not, calling them adaptive and non-adaptive,  respectively.
We give a number of algorithms, for channels with and without collision detection, for which we assess injection rates they can handle with bounded packet latency. 
More specifically, our non-adaptive  activation-based algorithm can handle injections rates smaller  than~$\frac{1}{3}$ on channels with collision detection, a  non-adaptive  full-sensing algorithm can handle injection rate~$\frac{3}{8}$ on  channels with collision detection, and an adaptive  activation-based algorithm can handle  injection rate~$\frac{1}{2}$ on channels without collision detection.
These positive results are summarized in Tables~\ref{tab:collision-detection} and~\ref{tab:no-collision-detection}.

We show a number of impossibility results as well.
First, no deterministic distributed algorithm can guarantee that each injected packet is eventually heard on the channel, when executed against an adversary that can activate multiple stations at the same time and when the adversary's burstiness is at least~$2$.
Second, no acknowledgment-based algorithm is fair against a $1$-activating adversary for any injection rate $0<\rho<1$, for a sufficiently large burstiness that depends on this~$\rho$.
Third, no deterministic algorithm for channels with collision detection  can provide bounded packet latency when injection rates are at least~$\frac{3}{4}$.

Table~\ref{tab:no-collision-detection} has only two classes of algorithms, namely two kinds of adaptive algorithms, while Table~\ref{tab:collision-detection} has four categories of algorithms.
This is because it is as an open question if a \emph{non-adaptive} deterministic algorithm for channels \emph{without} collision detection can provide bounded packet latency when executed against adversaries constrained by \emph{any} positive upper bound on their injection rates, where such a bound could  be  as small as possibly needed.


\begin{table}
\begin{center}
\begin{tabular}{|c ||c |c |c |}
\hline
\RB \LB
& $\mathsf{\rho \le \frac{1}{2}}$ 
&$\mathsf{\frac{1}{2}<\rho < \frac{3}{4}}$&$\mathsf{\rho \ge \frac{3}{4}}$\\
\hline\hline
\RB \LB
\textsf{adaptive activation-based }  &$4b-4$&& \\
\hline
\RB \LB
\textsf{adaptive full-sensing }  &&?& impossible\\
\hline
\end{tabular}
\parbox{\pagewidth}{
~
\caption{\label{tab:no-collision-detection}
Some results for adaptive algorithms on channels \emph{without} collision detection.
The entry $4b-4$ is an upper bound on packet latency in terms of type $(\rho,b)$ of a $1$-activating adversary.
The question mark indicates an open question if an algorithm with a bounded packet latency exists,  satisfying the respective restrictions.
The impossibility is of existence of a deterministic algorithm with bounded packet latency.
}}
\end{center}
\end{table}

\Paragraph{Related work.}

The adversarial queuing methodology in communication algorithms was introduced by Borodin et al.~\cite{BorodinKRSW-JACM01} and Andrews et al.~\cite{AndrewsAFLLK-JACM01}, who  used it to study the stability of store-and-forward routing in wired networks.
Adversarial queueing on multiple access channels was first studied by Bender et al.~\cite{BenderFHKL05}, who considered  randomized algorithms for the queue-free model.
A deterministic distributed broadcasting on multiple access channels with queues in adversarial settings was investigated by Chlebus et al.~\cite{ChlebusKR09,ChlebusKR-TALG12} and by Anantharamu et al.~\cite{AnantharamuCKR-INFOCOM10, AnantharamuCKR-SIROCCO11, AnantharamuCR-OPODIS09}.
That work on deterministic distributed algorithms was about systems with a known number of stations attached to the channel and with stations using individual names.

Acknowledgment-based algorithms include the first randomized algorithms studied on dynamic channels, as Aloha and binary exponential backoff fall into this category.
The throughput of multiple access channels, understood as the maximum injection rate with Poisson traffic that can be handled by a randomized algorithm and make the system stable (ergodic), has been intensively studied in the literature.
It was shown to be at most $0.568$ by Tsybakov and Likhanov~\cite{TsybakovL87}. 
Goldberg et al.~\cite{GoldbergJKP04} gave related bounds for backoff, acknowledgment-based and full-sensing algorithms.
H{\aa }stad et al.~\cite{HastadLR-SICOMP96} compared polynomial and exponential backoff algorithms in the queuing model with respect to bounds on their throughput.
For an  account of early work on full-sensing algorithms in channels with collision detection in the queue-free model, see the survey by Gallager~\cite{Gallager-TIT85}.

Randomized algorithms of bounded packet latency were given by Raghavan and Upfal~\cite{RaghavanU-SICOMP98} in the queuing model and by Goldberg et al.~\cite{GoldbergMPS-JACM00} in the queue-free model.
Upper bounds on packet latency in adversarial networks was studied by Anantharamu et al.~\cite{AnantharamuCKR-INFOCOM10,AnantharamuCKR-SIROCCO11} in the case of multiple access channels with injection rate less than~$1$ and by Ros{\'e}n and Tsirkin~\cite{RosenT06} for general networks and adversaries of rate~$1$.

Deterministic algorithms for collision resolution in static algorithmic problems on multiple access channels were first considered by Greenberg and Winograd~\cite{GreenbergW-JACM85} and Koml{\'o}s and Greenberg~\cite{KomlosG-TIT85}.
Algorithmic problems of distributed-computing flavor in systems in which multiple access channels provide the underlying communication infrastructure were considered by Bie\'nkowski et al.~\cite{BienkowskiKKK-STACS10} and Czy\.zowicz et al.~\cite{CzyzowiczGKP11}.

\section{Technical preliminaries}

\label{sec:preliminaries}

A multiple-access channel consists of a shared communication medium and stations attached to it.
We consider dynamic broadcasting, in which packets are injected into stations continually and the goal is to have them successfully transmitted on the channel.

A \emph{message} transmitted by a station includes at most one packet and some control bits, if any.
Every station receives a transmitted message successfully, including the transmitting station, when the transmission of this message does not overlap with transmissions by other stations; in such a case we say that the message is \emph{heard} on the channel.

We consider slotted channels which operate in globally synchronized rounds.
Messages and rounds are calibrated such that transmitting one message takes one full round.
A message transmitted in a round is delivered to every station in the same round.
When at least two messages are transmitted in the same round then this creates a \emph{collision}, which prevents any station from hearing any of the transmitted messages. 

When no station transmits in a round, then the round is called \emph{silent}.
A channel is said to be \emph{with collision detection} when the feedback from the channel in a collision round is different from the feedback received during a silent round,  otherwise the channel is \emph{without collision detection}. 
For a channel without collision detection, a collision round and a silent one are perceived  the same. 
A round is \emph{void} when no station hears a message; such a round is either silent or collision.

\Paragraph{Ad hoc channels.}

A station is said to be \emph{active}, at a point in time, when it has pending packets that have not been heard on the channel yet.
A station is  \emph{passive} in a round if either it has never had any packets to broadcast or all the packets it has ever received to broadcast have already been heard on the channel in previous rounds.

We assume that there is an unbounded supply of passive stations.
A passive station is said to get \emph{activated} when a packet or multiple packets are injected into it. 
We impose quantitative restrictions on how passive stations may be activated in a round, which results in finitely many stations being active in any round.
There is no upper bound on the number of active stations in a round of an infinite execution, since there is an unbounded supply of passive stations.

Stations are \emph{anonymous} when there are no individual names assigned to them.
We consider channels that are \emph{ad hoc} which means the following three properties: 
\begin{enumerate}
\item[(1)]
every station is anonymous, 
\item[(2)] 
an execution starts with every station initialized as passive, and 
\item[(3)] 
there is an unbounded supply of passive stations.
\end{enumerate}

\Paragraph{Adversarial model of packet injection.}

Packets are injected by leaky-bucket adversaries.
For a number $0<\rho\le 1$ and integer $b>0$, the \emph{adversary of  type $(\rho,b)$} may inject at most $\rho |\tau| + b$ packets in any time interval $\tau$ of $|\tau|$ rounds.
In such a context, the number~$\rho$ is called the \emph{rate of injection}.
The maximum number of packets that an adversary may inject in one round is called the \emph{burstiness} of this adversary.
The adversary of type $(\rho,b)$ has burstiness $\lfloor \rho + b\rfloor$.

Adversaries we consider are constrained by how many stations they can activate in a round.
An adversary is \emph{$k$-activating}, for an integer $k>0$, if at most $k$ stations may be activated in a round.
We consider $1$-activating adversaries, unless explicitly stated otherwise, which is necessary if algorithms are to be deterministic, see Section~\ref{sec:limitations}.

\Paragraph{Broadcast algorithms.}

We consider  broadcast algorithms that are deterministic and distributed.
In the context of distributed communication algorithms,  ``knowledge'' of properties of a system means using such properties as a part of code of an algorithm.
In algorithms we consider, names of stations and the number of stations in the system are not known.
This is in contrast with previous work on deterministic distributed algorithms, see~\cite{AnantharamuCKR-INFOCOM10, AnantharamuCKR-SIROCCO11, AnantharamuCR-OPODIS09, ChlebusKR09, ChlebusKR-TALG12}, where the names of stations and the number of stations could be used in a code.
No information about adversaries is reflected in codes of algorithms considered in this paper.

Every station has a private memory to store data needed in executing a communication algorithm.
This memory is considered to be unbounded, in the sense that it may store an arbitrary amount of data.
The part of a private memory of a station used to store packets pending transmission is organized as a queue operating  in a first-in-first-out manner.
Packets broadcast successfully are removed from their queues and discarded.
Packets are never dropped before a successful broadcast.

A \emph{state} of a station is determined by  values of its private variables, with the exception of  the queue to store packets, which is not a part of states.
One state is distinguished as \emph{initial}.
An execution begins with every station being in such initial state and with an empty queue.  
Packets are treated as objects devoid of properties, in that their contents do not affect state transitions.

The algorithms we consider are distributed, in the sense that they are ``event driven.''
An \emph{event}, in which a station participates, consists of everything that happens to the station in a round, including what the station receives as feedback from the channel and how many packets are injected into it.

An event is structured as the following sequence of actions occurring in a round in the order given:
\begin{enumerate}
\item[(i)] 
transmitting a packet,
\item[(ii)] 
receiving a feedback from the channel, 
\item[(iii)]  
having new packets injected,
\item[(iv)] 
making a state transition.
\end{enumerate}
Some among the actions (i) and (iii) may be void in a station in a round.
A state transition depends on a current state, a feedback from the channel, and on whether new packets were injected in the round.
In particular, the following actions occur during a state transition.
If a packet has just been successfully transmitted then it is dequeued and discarded.
If new packets have just been injected then they are all enqueued.
If a message is to be transmitted in the next round, possibly subject to packet availability,  then a message to be transmitted is prepared.
Such a message may include the packet from the top of the queue, when the queue is nonempty, but a message may consist of only control bits.

A station that begins a round as active with only one pending packet becomes passive when the packet is transmitted and heard in this round.
Since stations are anonymous and there is an unbounded supply of passive stations, we may assume without loss of generality that an active station is not activated again after becoming passive.

\Paragraph{Classes of algorithms.}

We define subclasses of algorithms by specifying what can be included in messages and how state transitions occur. 
We begin with the categorization into full-sensing, activation-based and acknowledgment-based algorithms.

General algorithms are called \emph{full sensing}.
This means that state transitions occur in each round, according to the state-transition rules represented by code.
The term ``full sensing'' is to indicate that every station is sensing the channel in every round. 
This encompasses passive stations, which means that when a full-sensing algorithm is executed, then passive stations undergo state transitions from the beginning of execution.

Algorithms such that a passive station is always in an initial state are called \emph{activation based}.
These algorithms have stations ignore feedback from the channel when they do not have any  packets to broadcast.

Finally, algorithms such that a passive station is in an initial state and an active station resets its state to initial in a round in which a packet it transmits is heard are called \emph{acknowledgment based}.
This definition is correct due to the stipulation that the contents of queues do not belong to what constitutes a state; in particular, a station may be in an initial state when its queue is nonempty.
This class of algorithms is inspired by randomized algorithms like Aloha or backoff ones.
Randomized algorithms in this class are prominent in applications, but deterministic acknowledgment based algorithms in ad hoc channels are too restricted to be useful, as we demonstrate in Section~\ref{sec:limitations}.

A station executing a full-sensing algorithm may, at least in principle, remember the whole history of feedback from the channel, unless the size of its private memory restricts it in this respect, which is not the case in our considerations.
An active station executing an activation-based algorithm may remember the history of  feedback from the channel since activation.
An active station executing an acknowledgment-based algorithm may remember the history of feedback from the channel since activation or the latest successful transmission if such occurred.
We understand these categorizations so that an acknowledgment-based algorithm is activation based, and an activation-based algorithm is full sensing.
This is because a station executing an activation-based algorithm could be considered as receiving feedback from the channel but idling in an initial state while not having pending packets.

When control bits are used in messages then we say that an algorithm is \emph{adaptive}, otherwise the algorithm is \emph{non-adaptive}.
The categorization of adaptive versus non-adaptive is independent of the categorization into full sensing and activation based and acknowledgment based, so we have six categories of algorithms in total.
This categorization of algorithms holds independently of categorization of  channels into ones with and without collision detection.
The strongest algorithms that we consider are full sensing adaptive for channels with collision detection, while the weakest algorithms are acknowledgment-based non-adaptive for channels without collision detection.

The terminology about acknowledgment-based and full-sensing algorithms is consistent with that used in the literature on randomized protocols in the queue-free model, see~\cite{Gallager-TIT85}, and also with the terminology used in the recent literature on deterministic distributed algorithms in adversarial settings, see~\cite{AnantharamuCKR-INFOCOM10,AnantharamuCKR-SIROCCO11, AnantharamuCR-OPODIS09, ChlebusKR09,ChlebusKR-TALG12}.
The categorization of algorithms as activation based is new because it is applicable for ad hoc channels.

\Paragraph{The quality of broadcasting.}

An execution of an algorithm is said to be \emph{fair} when each packet injected into a station is eventually heard.
An algorithm is fair against an adversary when each of its executions is fair when packets are injected subject to the constrains defining the adversary.

An execution of an algorithm has \emph{at most $Q$ packets queued} when in each round the number of packets stored in the queues of the active stations is at most~$Q$.
We say that an algorithm has at most $Q$ packets queued, against the adversary of a given type, when at most $Q$ packets are queued in any execution of the algorithm against such an adversary.

An algorithm is \emph{stable}, against the adversary of a given type, when there exist an integer $Q$ such that at most~$Q$ packets are queued in any execution against this adversary.
When an algorithm is unstable then the queues may grow unbounded in some executions, but no packet is ever dropped unless heard on the channel.
The semantics of multiple access channels allows for at most one packet to be heard in a round.
This means that when the injection rate of an adversary is greater than~$1$ then for any algorithm some of its executions produce unbounded queues.
In this paper, we consider only injection rates that are at most~$1$.

An execution of an algorithm has \emph{packet latency~$t$} when each packet spends at most $t$ rounds in the queue before it is heard on the channel.
We say that an algorithm has packet latency~$t$ against the adversary of a given type when each execution of the algorithm against this adversary  has packet latency~$t$.

\section{Limitations on deterministic broadcasting}

\label{sec:limitations}

In this section, we consider what limitations on deterministic distributed broadcasting are inherent in the properties of ad-hoc multiple access channels and the considered classes of algorithms.

We begin with restrictions on the number of stations that can be activated in one round when broadcast algorithms are deterministic.
We show that this needs to be at most one station.
The idea of proof is to construct an execution in which there are two stations that proceed through the same states.


\begin{proposition}
\label{proposition:2-activating}

No deterministic distributed algorithm is fair on channels with collision detection against a $2$-activating adversary of burstiness at least~$2$.
\end{proposition}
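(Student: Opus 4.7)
The plan is to construct a specific 2-activating adversary with burstiness 2 against which no deterministic algorithm can be fair, by exploiting the indistinguishability of two simultaneously activated anonymous stations.

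First I would describe the adversary: in round~$1$, activate two passive stations $s_1$ and $s_2$, injecting exactly one packet into each, and inject no further packets throughout the rest of the execution. This is consistent with a $(\rho,b)$-adversary with $\rho$ arbitrarily small and $b=2$, and it is $2$-activating.

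The core argument is a symmetry/coupling argument. I would show by induction on the round number $r \ge 1$ that $s_1$ and $s_2$ are in the same state at the start of round $r$ and have queues of equal length holding the same number of pending packets. The base case holds because the model stipulates that every station begins an execution in the distinguished initial state with an empty queue, and in round~$1$ each of $s_1,s_2$ receives exactly one injected packet. For the inductive step, since the algorithm is deterministic and the stations are anonymous, identical states lead to identical transmission decisions in round $r$: either both stations transmit (their messages coincide because the message is determined by state) or both remain silent. In the first case the channel produces a collision and delivers the collision feedback to both stations; in the second case the round is silent and both receive the silent-round feedback. Either way, no packet from $s_1$ or $s_2$ is heard, no packet is dequeued, no new packets are injected, and the state transition rule applied to identical inputs yields identical successor states.

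Combining these observations, $s_1$ and $s_2$ remain perpetually in the same state with their original packets still queued, and neither packet is ever heard on the channel. Hence the execution is not fair, which proves the proposition. The only subtle point, which I expect to be the main thing to state carefully, is that the feedback from a channel with collision detection depends only on the global set of transmitters in the round and is delivered identically to every station, so collision detection does not break the symmetry between $s_1$ and $s_2$; this is what forces the inductive step to go through in the collision-detection model as well as in the weaker no-collision-detection model.
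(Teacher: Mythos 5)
Your proposal is correct and follows essentially the same approach as the paper's proof: activate two anonymous stations simultaneously with one packet each, inject nothing further, and argue by induction on rounds that the two stations remain in identical states, so every transmission attempt collides and neither packet is ever heard. Your explicit remark that the channel feedback is delivered identically to both stations makes the inductive step slightly more careful than the paper's, but it is the same argument.
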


\begin{proof} 
Let us consider an arbitrary deterministic distributed algorithm.
We will specify an execution in which two stations perform the same actions.
The execution is determined by the given algorithm and by how the adversary injects packets, which we can specify.

Let an execution begin with the adversary injecting packets simultaneously into two passive stations, one packet per station.
These two anonymous stations execute the same deterministic algorithm, so their actions are the same until one station experiences what the other does not.
The adversary does not need to inject any other packets.
It follows, by induction on the round numbers, that the stations undergo the same state transitions.
In particular, when one of these two stations transmits a packet then the other one transmits as well, and when one station pauses then the other station pauses as well.
In this execution, each transmission attempt results in a collision.
This means that the two packets never get heard on the channel.
\end{proof}

In the light of Proposition~\ref{proposition:2-activating}, we will restrict our attention to $1$-activating adversaries in what follows.
For $1$-activating adversaries, we may refer to the stations participating in an execution by the round numbers in which they got activated.
So when we refer to a \emph{station~$v$}, for some integer $v\ge 0$, then we mean the station that got activated in round~$v$.
If no station got activated in a round~$v$, then a station bearing the number~$v$ does not exist.
To avoid having multiple identities associated with a station, we assume that once a station is activated and later becomes passive, then it never gets activated again; this does not make a difference from the perspective of the adversary, as we assume that there is an unbounded supply of passive stations.

Next we show that, for any injection rate $\rho$, an acknowledgment-based algorithm can be fooled by an adversary of type $(\rho,b)$ for sufficiently large burstiness~$b$.
To this end, it is sufficient to demonstrate the existence of an execution in which two stations simultaneously start working to broadcast a new packet each.
This is because the stations start from the same initial state, they are anonymous, and they  execute the code of the same deterministic algorithm.


\begin{proposition}
\label{pro:acknowledgment-impossibility}

No acknowledgment-based algorithm for channels with collision detection is fair against a $1$-activating adversary of type $(\rho,b)$ such that $2\rho + b\ge 3$.
\end{proposition}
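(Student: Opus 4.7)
The plan is to argue by contradiction: assume that an acknowledgment-based algorithm $\cA$ is fair on a channel with collision detection against a $1$-activating $(\rho,b)$-adversary with $2\rho + b \ge 3$, and construct an execution in which two anonymous stations end up in identical states while each still holds a pending packet. Since anonymous stations in the same state running the same deterministic code take identical actions thereafter, every subsequent non-silent round is a collision, so these two remaining packets are never heard, contradicting fairness. The numerical hypothesis enters through the easy fact that $2\rho + b \ge 3$ combined with $\rho \le 1$ forces $\rho + b \ge 3 - \rho \ge 2$, so the burstiness $\lfloor \rho + b\rfloor$ is at least $2$ and the adversary is entitled to inject two packets into a single round.

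Concretely, the adversary would inject two packets into a single passive station $A$ in round $0$ and then stay silent until a trigger moment. Because the in-round ordering places transmission (step (i)) before injection (step (iii)), station $A$ cannot transmit in round $0$. The fairness of $\cA$ against the simpler adversary that injects only these two packets guarantees that $A$'s first packet is heard in some round $T \ge 1$, with $A$ the only active station throughout. Now modify the adversary to additionally inject one packet into a previously passive station $B$ at round $T$; this injection happens in step (iii), which is after transmission, so the algorithm's behavior up to and through the transmission of round $T$ is unchanged, and the packet is still heard at round $T$. The key synchronization claim is that at the start of round $T+1$ the two stations are indistinguishable: $A$, having had its packet heard, is reset to the initial state by the acknowledgment-based rule while its second packet stays enqueued; $B$ was passive throughout round $T$ and hence already in the initial state, and the stipulation that packet contents do not affect state transitions leaves it there after injection. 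Both stations are anonymous, both sit in the initial state, and both hold exactly one pending packet, so a straightforward induction on the round number --- identical in spirit to the one used in Proposition~\ref{proposition:2-activating} --- shows that $A$ and $B$ act identically forever; every transmission attempt collides and the two remaining packets are never heard.

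The last step is to verify that this injection pattern respects the $(\rho,b)$-constraint. Only three packets are ever injected, two in round $0$ and one in round $T$, so the only nontrivial time interval to check is $[0,T]$, whose length $T+1$ is at least $2$; the requirement $3 \le \rho(T+1) + b$ follows from $2\rho + b \ge 3$, and every other interval contains at most two packets and is controlled by $\rho + b \ge 2$. I do not anticipate a real obstacle in these calculations; the crux of the argument is the state-synchronization step at round $T$, which critically exploits the acknowledgment-based reset rule --- the same construction would fail for activation-based algorithms, because $A$ could carry nontrivial state accumulated during rounds $1, \dots, T-1$ that $B$ does not share.
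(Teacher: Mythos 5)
Your proof is correct and takes essentially the same route as the paper's: activate one station with two packets, let the acknowledgment-based reset fire in the round its first packet is heard, activate a fresh station in that same round so both sit in the initial state with one pending packet each, and conclude by symmetry that every subsequent transmission collides, with the constraint $2\rho+b\ge 3$ covering the three packets injected over the two relevant rounds. The only difference is cosmetic: the paper normalizes the algorithm so that the first packet is heard in round two (``a delay can be offset with a suitably earlier activation''), whereas you invoke the fairness hypothesis to locate the round $T$ of the first successful transmission and time the second activation there.
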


\begin{proof} 
Let us consider an arbitrary acknowledgment-based deterministic distributed algorithm.
We may assume that a passive station immediately attempts to transmit a packet when activated, as otherwise a delay can be offset with a suitably earlier activation.

Let the adversary inject two packets into a passive station in round~$1$, which determines the station number~$1$.
This station transmits its first packet successfully in the second round and next immediately resets its state to initial, because it executes an acknowledgment-based algorithm.
Let the adversary inject one packet into another passive station in round two, which determines the station number~$2$.
The adversary will not inject any other packets.
The two stations, numbered~$1$ and~$2$, start working on new packets from the third round, each being in the initial state.
The stations undergo the same state transitions in the execution that follows,  by induction on round numbers.
In particular, these two stations either transmit simultaneously or pause simultaneously.
This means that each transmission attempt results in a collision, so that the two outstanding packets never get heard on the channel.

For such a scenario to be possible, the adversary of a type $(\rho,b)$ needs to be able to inject three packets in two consecutive rounds.
Let $\tau$ be a time interval of two consecutive rounds.
The adversary can inject $\rho|\tau|+b = 2\rho + b$ packets during~$\tau$, where $|\tau|=2$ is the length of~$\tau$.
For the argument to work, it is sufficient for the inequality $2\rho + b\ge 3$ to hold.
\end{proof} 

Proposition~\ref{pro:acknowledgment-impossibility} can be interpreted as follows: for any acknowledgment-based algorithm and any injection rate, there exists a $1$-activating adversary with this injection rate, and with a suitably large burstiness, such that the algorithm is not fair against this adversary.
Because of this fact, we will consider only activation-based and full-sensing algorithms in what follows.
We investigate the question what is the maximum injection rate for which bounded packet latency can be attained, abstracting from burstiness which will turn out not to be critical. 
An answer may depend on the restrictions determining a class of algorithms, like activation based algorithms, and whether a channel is with collision detection or not.

\Paragraph{Impossibility for injection rate~$1$.}

We observe next that no deterministic distributed algorithm can provide stability against an adversary of type $(\rho,b)$ with the injection rate equal to~$\rho=1$ and with burstiness $b+1\ge 2$.
To see this, consider an arbitrary algorithm executed against such an adversary.
We build an execution by determining prefixes of a sequence of auxiliary executions.
Let an execution $\cE_1$ be obtained by activating a station per each round, by way of injecting one packet into a passive station.
There are the following two cases.
One is when there exists an active station~$v_1$ which was activated with one packet and alone transmits a packet.
The transmission by~$v_1$ is successful in~$\cE_1$ and a packet is heard.
Let us modify execution $\cE_1$ to obtain another execution~$\cE_2$ such that the station~$v_1$ does not get activated at all, which results in this round being silent.
Furthermore, after this silent round in~$\cE_2$, we activate another station  by injecting two packets in it.
The target execution has its prefix determined until and including the simultaneous injection of these two packets in~$\cE_2$.
In the second case, there exist two active stations, say, $v_2$ and~$v_3$, such that they transmit together in~$\cE_1$ in a round of the first transmission in this execution.
This creates a collision, which contributes to a packet delay.
The target execution has its prefix determined until and including this collision in~$\cE_1$.
Which of the two cases holds depends on the algorithm considered.
This construction continues indefinitely producing prefixes of arbitrarily large lengths.
Each time we consider the execution with its prefix determining the target execution, we next examine the suffix after this prefix for one of the two possible cases as above.
The final execution is obtained as the union of these prefixes.
There are infinitely many void rounds in this execution, in which no packet is heard, while the adversary keeps injecting packets, with the rate of one packet per round on the average.
This concludes the argument that the injection rate of one packet per round is too much to facilitate  bounded packet latency by a deterministic distributed algorithm.

This fact  demonstrates a difference between the adversarial model of ad-hoc channels with the model of channels in which  stations know a fixed number of stations attached to the channel along with their names.
In that latter model,  stability can be obtained even for  injection rate~$1$, as it was demonstrated in~\cite{ChlebusKR09}.
Moreover, for that model a bounded packet latency can be attained for any injection rate less than~$1$, see~\cite{AnantharamuCKR-INFOCOM10,AnantharamuCKR-SIROCCO11}.

\Paragraph{Impossibility of injection rate $3/4$.}

We show next that bounded packet latency cannot be obtained by a deterministic algorithm for the model of ad-hoc channels when injection rate equals~$3/4$.
We prepare for the proof of this fact by considering a construction of an execution for any fixed algorithm~$\cA$ and a $1$-activating adversary of a type $(\frac{3}{4},b)$, where $b\ge 2$.

In this execution, the adversary will inject packets into any station only once at the time of this station's initialization.
A station is activated with either one or two packets.
When a station transmits a packet, then we assume that the stations executing the algorithm learn if it was the last packet or there are more pending packet in this station. 
This helps the algorithm and not the adversary.

Should a certain round were not checked for the possibility of a station being activated in this round by providing an opportunity to transmit at least one packet, then there would exist an execution in which a station \emph{is} activated in this round but its packets are never heard on the channel.
This observation is represented formally as follows.

We say that \emph{a round~$v$ is tried in round~$s$} of the  constructed execution if either 
\begin{enumerate}
\item[(1)] a station got activated in round~$v$ and it transmits its packet in round~$s$, or 

\item[(2)] no station got activated in round~$v$, but such a station would have transmitted in round~$s$ if one were activated in  round~$v$.
\end{enumerate}
We use the phrase ``station~$v$ is tried'' interchangeably with ``round~$v$ is tried,'' even if no station is activated in round~$v$ so ``station~$v$'' does not exist.
For a station $v$ to be tried in a round~$s$ we need $s > v$, as it is possible to verify existence of a station activated in a round only after that round.

We say that \emph{a station $v$ gets certified in round $w$} when it is the round in which $v$ is tested and the interaction of $v$ with the channel, or lack thereof, certifies that $v$ does not have  packets pending transmission. 
There are two ways in which such a certification could occur. 
One is when a station identified by the number~$v$ does not exist, because no station got activated in that round, and so station~$v$ cannot transmit.
This is reflected in either a silence or a message transmitted by some other station heard on the channel in this round.
Another way is when $v$ exists and is still active in round $w$, and it transmits its last packet in this round, and  this packet is heard on the channel.
Such a station $v$ immediately become passive and is no longer involved in computations.

When only one station is scheduled to be tried in a round, then we refer to this as a \emph{single trial}, or a \emph{single certification} when the trial results in a certification.
When many stations are scheduled to be tried in one round then this is a \emph{multiple trial}, which also can be a \emph{multiple certification} when there is no collision in this round and a station that transmits, if any, transmits its only packet.

If a station $v$ is certified in a round $s > v$ then the number $s-v$ is called the \emph{delay of certification of~$v$}.
The execution we will construct will have the property that delays of certifications in it grow unbounded while the adversary injects infinitely many packets.
This will allow to conclude that packet latency is unbounded in this execution.
To see this, observe that if algorithm~$\cA$ has packet latency at most~$t$  then the rounds when stations get activated are tried with a delay of at most~$t$. 
This is because if some round~$r$ gets certified later than at round $r+t$ and $r$ got activated that the activating packet waited more than $t$ rounds.
Since the adversary injects infinitely many packets, their stations certifications grow unbounded and the packets wait accordingly.

We organize the construction by considering intervals of rounds that we call \emph{segments}. At any stage of the construction, the already determined segments make a prefix  of the target execution.
The construction proceeds by considering  possible extensions of an already determined prefix of segments by adding a next segment.
This is by way of foreseeing actions of stations, as directed by  algorithm~$\cA$, and involves corresponding actions by the adversary. 
When discussing a next segment, we refer only to stations that still need to complete their certification.


\begin{table}[tp]
\begin{center}
\begin{tabular}{|c| c  c  c  c | c | l |}
\hline
\RB \LB
 \# & \multicolumn{4}{|c|}{ segment}   & delay & remarks \\
\hline
\hline

\RB \LB
 &  &  &  &  & & 
Single certifications of  stations inherited\\
\LB
$C1$ &\underline{$1$} & \ldots & &   & $0$ & 
from previous segments.\\
\hline

\RB \LB
 &  &  &  & &&   
At most $3$ single certifications with no transmissions, \\
\LB
$C2$ &$1$ & $2$ & \ldots &   & $0$ & 
followed by trials of multiple or inherited rounds.\\
\hline

\RB \LB
 & $k$  & & & &  &   
This segment consists of one round.\\
\RB \LB
 & $\vdots$  & & & &  &   
All stations, except for possibly one, are committed.\\
\RB \LB
 & $2$  & & & &  &   
If $k>1$ then the delay is negative,\\
\RB \LB
$C3$ & \underline{$1$}   & & &  & $-k+1$ & 
which means it is a speedup. \\
\hline

\end{tabular}
\parbox{\pagewidth}{\FF\caption{\label{table-a} 
Three special cases of segments. 
}}
\end{center}
\end{table}

Some of such stations may have attempted to transmit during the prefix, but their trial was not completed as a certification because of collisions; such stations are called \emph{inherited} in future segments.
A station that is not inherited is called \emph{new}.
The arguments we give never rely on any stations to be inherited, so activating always costs at least one packet, but if a station is inherited indeed then this may help the adversary in that packets need not be used to activate in order to create a collision.
A  round is \emph{committed} when the adversary determines that no station gets activated in it.

We assume that algorithm~$\cA$ has additional properties that facilitate the exposition of arguments and can be made without loss of generality; we present them next.
Algorithm $\cA$ could waste rounds, like do not attempt trials  in some rounds  or repeat transmissions that already resulted in a collision; we assume that algorithm $\cA$ never behaves this way.
Algorithm~$\cA$ might schedule some rounds to be tried by having stations transmit one station per round, which is neutral with respect to packet delay when the station was not activated at all or when it holds only one pending packet.
Such certifications are neutral with respect to packet delay, as we certify one station in one round and may move to the next station in the next round.
We assume $\cA$ never performs such certifications, unless this is essential in the argument used.
When a new segment is considered then the stations that are tried in it are referred to by the names $1,2,3,\ldots$ which means that these are the stations not certified yet listed in the relative order of their activation times.
This clearly helps algorithm~$\cA$ to decrease worst-case packet latency.
This also facilitates presenting the adversary's strategy, as the order of activation reflects time's flow.
Observe that when a station is tried in a last round of a segment as the smallest station not yet certified and this station transmits successfully in this round but still has a pending packet then this station is the smallest station not yet certified in the next round.
In such a situation, this station is tried in this next round, which is the first round of the next segment, possibly along with other stations.

The adversary's strategy is discussed by cases, which are summarized and illustrated in Tables~\ref{table-a} through~\ref{table-f}.
These cases do not include trials involving more than three stations in one round, which are discussed later and reduced to the tabulated cases.
In these tables, natural numbers are names of stations in the order of joining trials in the segment.
A circled number indicates that the station identified by the number transmits and a collision occurs with another transmission.
An underlined number means  that this station transmits and its last packet is heard.
A doubly underlined number means that the station transmits and its packet is heard but this station holds one more packet.
A number with no marking indicates that this station is scheduled to transmit but has not been activated.
A number with a tilde above means that it is one station of a pair, each of them with a tilde symbol, such that precisely one of these stations holds a packet.
In such a situation of a pair of stations marked by the tilde, the decision which station in the pair holds a packet and which not is made later, depending on the algorithm's actions.
When a station is said to become inherited then this refers to segments following the considered one.


\begin{table}[tp]
\begin{center}
\begin{tabular}{|c| c  c  c  c | c | l |}
\hline
\RB \LB
 \# & \multicolumn{4}{|c|}{ segment} & delay & remarks \\
\hline
\hline

\RB \LB
$C4$ &1 & $2$ & \underline{\underline{$3$}} & \ldots & $1$ & 
A prefix defining this case. 
Stations $1$ and $2$ certified.\\
\hline

\RB \LB
$C5$ &1 & $2$ & \underline{\underline{$3$}} & \underline{$3$}  & $1$ & 
Station $3$ gets certified just after $C4$.\\
\hline

\RB \LB
 &  &  &   &  \circled{$5$} & &  
A possible next round after segment $C5$. 
This is to be \\
\LB
$C6$ & $2$& \underline{\underline{$3$}} &  \underline{$3$} & \circled{$4$} & $2$ & 
 continued as in cases that start with two collisions.\\
\hline

\RB \LB
 &  &  &  &  \circled{$4$}  & &  
Another possible next round immediately after $C4$.\\
\LB
$C7$ &1 & $2$ & \underline{\underline{$3$}} & \circled{$3$}  & $2$ & 
Continued as in cases that start with two collisions.\\
\hline

\RB \LB
 &  &  & \circled{$4$} &  \circled{$5$}  & &  
A possible next round after $C7$.\\
\LB
$C8$ &$2$ & \underline{\underline{$3$}} & \circled{$3$} & \circled{$3$}  & $3$ & 
To be continued as the case of Table~\ref{table-c}. \\
\hline

\RB \LB
 &  &  & \circled{$4$} &  $6$  & &  
A possible next round after $C7$, with the delay including  \\
\LB
$C9$ &$2$ & \underline{\underline{$3$}} & \circled{$3$} & $5$  & $1$ & 
the rounds of $C7$.
To be continued as the case of Table~\ref{table-d}. \\
\hline

\RB \LB
  &  &  &  &  $\tilde{5}$  & &  
A possible continuation of $C4$.\\
\LB
&  &  &  &  $\tilde{4}$ & &  
One among stations $4$ and $5$  holds a packet. \\
\LB
$C10$ &1 & $2$ & \underline{\underline{$3$}} & \circled{$3$}  & $2$ & 
To be continued as the case of Table~\ref{table-e}.
 \\
\hline
\end{tabular}
\parbox{\pagewidth}{\FF\caption{\label{table-b} 
The case when four single trials would occur  if none of the respective stations $1, 2, 3, 4$ were activated.  
To prevent such certifications with no delay, represented as $1\ 2\ 3 \ 4$, station~$3$ is activated with two packets.
Exactly one station with the tilde symbol contains a packet, which one it is to be is decided later.
}}
\end{center}
\end{table}

The execution can make certain actions of the adversary in a given segment either not possible or clearly advantageous, with the goal to increase packet latency.
For instance, when a single station is to be tried in a round then it is advantageous for the adversary not to activate this station, if such a decision is still possible, which is the case when the station is not inherited and not activated as justified by previous events in this segment.
This and similar cases are summarized in Table~\ref{table-a}.
In this table, row~$C1$ represents singe certifications of inherited stations, and the adversary cannot make any decision at this point as the station has been activated already. 
Row $C2$ represents single certifications when there are at most three of them in succession, should none of them were activated.
The adversary does not activate any among these stations.
Row $C3$ depicts one round in which a group of stations get certified simultaneously, each already committed not to be activated in previous segments.
This row is to represent two variants, determined by whether a packet is heard on the channel or not in this round.
As depicted in Table~\ref{table-a}, some station also transmits a packet, which happens to be station~$1$.

A more involved case occurs when at least four consecutive single trials of new uncommitted stations would occur, if none of them were activated; this case is summarized in Table~\ref{table-b}.
This is the only situation in which the adversary activates a station with two packets.
In this case, the first two rounds are silent, which allows the adversary to inject more packets later, which happens in the third round.
Station $3$ transmits in the third round but it has another packet, which defines the prefix defining this case represented by row~$C4$ in Table~\ref{table-b}.
There are four essential continuations possible, depending on what happens in the fourth round of the segment, while we omit continuations with single trials.
One is represented by row $C5$, in which station $3$ transmits again, to become certified thereby.
Another continuation is given in row~$C6$, in which two stations, including station~$3$, are tried again.
Yet another continuation is depicted in row~$C7$, in which two new stations are tried together, both different from~$3$.
This in turn may be continued either as in $C8$, which allows to reduce the problem to Table~\ref{table-c},  or as in $C9$, which can be reduced to the case of Table~\ref{table-d}.
The final continuation is represented by row~$C10$, in which three stations are tried together, which can be reduced to the case of Table~\ref{table-e}.


\begin{table}[tp]
\begin{center}
\begin{tabular}{|c| c  c  c   | c | l |}
\hline
\RB \LB
 \# & \multicolumn{3}{|c|}{ segment}  & delay & remarks \\
\hline
\hline

\RB \LB
 &  \circled{$2$} &  \circled{$3$} & &  &   
A prefix defining this case.\\
\LB
$C11$ &\circled{$1$} &  \circled{$1$} &  \ldots & $2$ & 
Station $4$ is committed now. \\
\hline

\RB \LB
 &  \circled{$2$} &  \circled{$3$} & $4$ &  &   
Stations $1$ and $4$ get certified.\\
\LB
$C12$ &\circled{$1$} &  \circled{$1$} &  \underline{$1$}  & $1$ & 
Stations $2$ and $3$ inherited by future segments.\\
\hline

\RB \LB
 &  \circled{$2$} &  \circled{$3$} & $5$ &   &  
The adversary still can afford to activate\\
\LB
$C13$ &\circled{$1$} &  \circled{$1$} & $4$   & $1$ & 
stations $6$, $7$, and $8$. \\
\hline

\RB \LB
 &   &   & \circled{$5$} &  &   
Station $4$ is committed but not certified yet.\\
\LB
 &  \circled{$2$} &  \circled{$3$} & $4$ &  &   
\\
\LB
$C14$ &\circled{$1$} &  \circled{$1$} & \circled{$1$}   & $3$ & 
The delay is greater than $1$.\\
\hline

\RB \LB
 &   &   & \circled{$6$} &  &   
Station $4$ is committed but not certified yet.\\
\LB
 &  \circled{$2$} &  \circled{$3$} & \circled{$5$} &  &   
\\
\LB
$C15$ &\circled{$1$} &  \circled{$1$} & $4$   & $3$ & 
The delay is greater than $1$.\\
\hline
\end{tabular}
\parbox{\pagewidth}{\FF\caption{\label{table-c} 
The case of segments starting with two trials of pairs of stations, with a repetition of one station among the pairs.  
This makes a prefix of this case specified as $C11$, in which station $1$ represents a repeated station.
}}
\end{center}
\end{table}

Next we consider cases when a segment starts with a multiple trial.
The adversary always activates the two stations of smallest numbers, represented by stations $1$ and $2$, which allows to ignore other stations tried together, if there are any.
All such cases are summarized in Tables~\ref{table-c} through~\ref{table-f}.
They share the same first round, so are determined by what happens in the second round.

The first case among them is summarized in Table~\ref{table-c}.
This case occurs when there is again a multiple trial in the second round and it is such that one of the stations of the first round also participates, which is represented by station~$1$.
Then the adversary also activates the station of the smallest number that is tried in the second round, which is represented by station~$3$.
This specifies the prefix of the first two rounds of this case, as represented by row $C11$ in Table~\ref{table-c}.
At this point, station~$4$ is omitted in activations, and either it is certified, as in segments $C12$ and $C13$ when at most two stations are tried together, or it is not, as in rows $C13$ and $C14$  when three stations are tried together, which allows the adversary to create a collision by activating stations with numbers larger than~$4$.


\begin{table}[tp]
\begin{center}
\begin{tabular}{|c| c  c  c  c | c | l |}
\hline
\RB \LB
 \# & \multicolumn{4}{|c|}{ segment} & delay & remarks \\
\hline
\hline

\RB \LB
 &  \circled{$2$} & $4$ &  &  &&   
A prefix defining this case. \\
\LB
$C16$ &\circled{$1$} & $3$ &\ldots & \ldots & & 
\\
\hline

\RB \LB
 &  \circled{$2$} & $4$ & \circled{$6$} & \circled{$8$} &&   
The maximum number of stations that the adversary \\
\LB
$C17$ &\circled{$1$} & $3$ & \circled{$5$} & \circled{$7$}  &  $2$& 
can activate, as accounted by two stations not activated.\\
\hline

\RB \LB
 &  \circled{$2$} & $4$  & \circled{$6$} & \circled{$7$} &&   
Stations $1,2,5,6,7$ inherited  by future segments.\\
\LB
$C18$ &\circled{$1$} & $3$ & \circled{$5$} & \circled{$1$} & $2$ & 
 \\
\hline

\RB \LB
 &  \circled{$2$} & $4$ & \circled{$5$}  & \circled{$7$} & &  
A case dual to $C18$.\\
\LB
$C19$ &\circled{$1$} & $3$ & \circled{$1$} & \circled{$6$} & $2$ & 
\\
\hline

\RB \LB
 &  \circled{$2$} & $4$ & \circled{$5$} & \circled{$6$} & &   
Repeating a station in trials helps the adversary,\\
\LB
$C20$ &\circled{$1$} & $3$ & \circled{$1$} & \circled{$1$} & $2$ & 
 as compared to the immediately preceding segments.\\
\hline

\end{tabular}
\parbox{\pagewidth}{\FF\caption{\label{table-d} 
The case of a segment that begins with two trials of pairs of stations, with no repetitions among them.
This determines a prefix specified as $C16$. 
The first round  produces a collision.
The second round certifies two rounds with no station activated in them.
}}
\end{center}
\end{table}

The second case starting with a multiple trial if given in Table~\ref{table-d}.
It occurs when there is again a multiple trial of two stations in the second round but no station of the first round participates in it.
These new stations are represented by~$3$ and~$4$.
The adversary does not activate these stations allowing them to be certified in the second round.
This allows the adversary to activate stations $5,6,7,8$ accounting them to not activating $3$ and $4$ while maintaining injection rate~$3/4$.
Possible continuations with multiple trials are represented by rows $C17$ through~$C20$.


\begin{table}[tp]
\begin{center}
\begin{tabular}{|c| c  c  c  c | c |  l |}
\hline
\RB \LB
 \# & \multicolumn{4}{|c|}{ segment} & delay & remarks \\
\hline
\hline

\RB \LB
 &   & $\tilde{4}$ & & & &   
A prefix of this case.
The delay after the first\\
\LB
 &  \circled{$2$} & $\tilde{3}$ &  &  & &   
two rounds.
A dual case is obtained \\
\LB
$C21$ &\circled{$1$} & \circled{$1$} & $\ldots$ & $\ldots$ &  $2$& 
 by replacing $1$ with $2$ in the second round.\\
\hline

\RB \LB
 &   & \circled{$4$} & & & &   
Station $3$ certified before $4$ in a single trial \\
\LB
 &  \circled{$2$} & $3$ &  &  & &   
that follows the first two rounds.\\
\LB 
$C22$ &\circled{$1$} & \circled{$1$} & $3$ & $\ldots$ &  $2$& 
 \\
\hline

\RB \LB
 &   & $4$ & & & &   
A segment dual to $C22$. \\
\LB
 &  \circled{$2$} & \circled{$3$} &  &  & &   
Station $4$ certified before $3$ in a single trial\\
\LB
$C23$ &\circled{$1$} & \circled{$1$} & $4$ & $\ldots$ &  $2$& 
 that follows the first two rounds. \\
\hline

\RB \LB
 &   & \circled{$4$} & & & &   
Station $4$ tried before $3$ along a station\\
\LB
 &  \circled{$2$} & $3$ & \circled{$5$} &  & &   
 of a number greater than $4$, namely $5$. \\
\LB
$C24$ &\circled{$1$} & \circled{$1$} & \circled{$4$} & $\ldots$ &  $3$& 
Station $3$ is committed but not certified yet.\\
\hline

\RB \LB
 &   & $4$ & & & &   
Station $5$ not activated because $4$ is committed.\\
\LB
 &  \circled{$2$} & \circled{$3$} & \circled{$3$} & $5$ & &   
Station $3$ tried before $4$ along an already activated\\
\LB
$C25$ &\circled{$1$} & \circled{$1$} & \circled{$1$} & $4$ &  $2$& 
 station, namely $1$ in this segment. \\
\hline

\RB \LB
 &   & \circled{$4$} & & & &   
Station $4$ used before $3$ along an already activated\\
\LB
 &  \circled{$2$} & $3$ & \circled{$4$} & \circled{$5$} & &   
station, namely $2$.   
Station $3$ is committed\\
\LB
$C26$ &\circled{$1$} & \circled{$1$} & \circled{$2$} & \circled{$4$} &  $4$& 
and not certified yet. 
A delay is greater than $1$. \\
\hline

\RB \LB
 &   & \circled{$4$} & & & &   
A segment dual to $C25$.\\
\LB
 &  \circled{$2$} & $3$ & \circled{$4$} & $5$ & &   
Station $4$ used before $3$ along an already activated\\
\LB
$C27$ &\circled{$1$} & \circled{$1$} & \circled{$2$} & $3$ &  $2$& 
station, namely $2$ in this segment.\\
\hline

\RB \LB
 &   & $4$ & & & &   
 Activation of station $3$ rather than $4$ is arbitrary\\
\LB
 &  \circled{$2$} & \circled{$3$} & $4$ &  & &   
 in this segment, as station $4$ could be activated\\
\LB
$C28$ &\circled{$1$} & \circled{$1$} & \underline{$3$} & $\ldots$ &  $1$& 
 instead. \\
\hline

\end{tabular}
\parbox{\pagewidth}{\FF\caption{\label{table-e} 
The case of a segment that begins with two trials in the first rounds followed by a trial of three  rounds.
There is a repetition of tried stations in the prefix defining this case, unlike in Table~\ref{table-f}. 
Each of the first two rounds produces a collision.
Exactly one station with the tilde symbol contains a packet; it is to be decided later which one it is.
}}
\end{center}
\end{table}

The third case of the first round in a segment being a multiple trial is summarized in Table~\ref{table-e}.
It is defined by a triple trial in the second round when one station of the first trial participates, which is represented by station~$1$, and two new stations are represented by $3$ and~$4$.
The adversary enforces a collision in the second round by activating either $3$ or $4$ but not both; which one it is to be decided later.
This may be determined by a single certification, as exemplified by rows $C22$ and $C23$.
Other sub-cases are listed in rows $C24$ through $C28$, with explanations in the columns of remarks.


\begin{table}[tp]
\begin{center}
\begin{tabular}{|c| c  c  c  c | c |  l |}
\hline
\RB \LB
 \# & \multicolumn{4}{|c|}{ segment} &   delay & remarks \\
\hline
\hline

\RB \LB
 &   & $\circled{5}$ & & & &  
A prefix defining this case.\\
\LB
 &  \circled{$2$} & $\tilde{4}$ & & & &   
Exactly one  among stations $3$ and $4$ \\
\LB
$C29$ &\circled{$1$} & $\tilde{3}$ & \dots & \ldots & $2$& 
is activated with a packet.\\
\hline

\RB \LB
 &   & \circled{$5$} & & & & 
A continuation of $C29$. 
Station $4$  becomes\\
\LB
 &  \circled{$2$} & $4$ & & \circled{$6$} & &   
 committed in the third round because \\
\LB
$C30$ &\circled{$1$} & \circled{$3$} & $4$ &  \circled{$5$}& $3$& 
 it gets tried as single before $3$ does. \\
\hline

\RB \LB
 &   & \circled{$5$} & & & &  
A continuation of $C29$.\\
\LB
 &  \circled{$2$} & \circled{$4$} & & $7$& &   
A segment dual to $C30$, \\
\LB
$C31$ &\circled{$1$} & $3$ & $3$ & $6$ &  $1$& 
with $3$ and $4$ exchanging their roles.\\
\hline

\RB \LB
 &   & \circled{$5$} & & & &  
Station $3$ tried before $4$ along an activated\\
\LB
 &  \circled{$2$} & $4$ & \circled{$3$} & $6$ & &   
station, namely $1$ in this segment.
Station $4$
\\
\LB
$C32$ &\circled{$1$} & \circled{$3$} & \circled{$1$} & \underline{$5$}  & $2$& 
 is committed. 
The delay is greater than $1$.\\
\hline

\RB \LB
 &   & \circled{$5$} & & & &  
A segment dual to $C32$, with $3$ and $4$\\
\LB
 &  \circled{$2$} & \circled{$4$} & \circled{$4$} & $6$& &   
 exchanging roles, and a committed\\
\LB
$C33$ &\circled{$1$} & $3$ & \circled{$1$} & $3$  & $2$& 
station $3$ replacing $5$ in $C32$\\
\hline

\RB \LB
 &   & \circled{$5$} & & & &  
The activation of $3$ rather than $4$ is arbitrary\\
\LB
 &  \circled{$2$} & $4$ & $4$ & \circled{$6$} & &   
 in this segment.\\
\LB
$C34$ &\circled{$1$} & \circled{$3$} & \underline{$3$} &  \circled{$5$}  & $2$& 
Continued as cases starting with two collisions.\\
\hline

\RB \LB
 &   & $\circled{5}$ & & & &  
Station $4$ gets committed in the third round.\\
\LB
 &  \circled{$2$} & $4$ & \circled{$5$} & $6$ & &   
Station $4$ is not certified yet.\\
\LB
$C35$ &\circled{$1$} & \circled{$3$} & \circled{$3$} & \underline{$3$}  & $2$& 
The delay is greater than $1$. \\
\hline

\RB \LB
 &   & $\circled{5}$ & & & &  
This segment is dual to $C35$, with \\
\LB
 &  \circled{$2$} & \circled{$4$} & \circled{$5$} & $6$ & &   
a committed $3$  replacing an activated $3$ in $C35$ \\
\LB
$C36$ &\circled{$1$} & $3$ & \circled{$4$} & $3$  & $2$& 
 in the fourth round. \\
\hline

\end{tabular}
\parbox{\pagewidth}{\FF\caption{\label{table-f} 
The case of a segment with two stations tried in a first round and three stations in the second one.
There is no repetition of tried stations in the prefix defining this case, unlike of Table~\ref{table-e}. 
Each of the first two rounds produces a collision.
Exactly on station with the tilde symbol contains a packet; it is to be decided later which one it is.
}}
\end{center}
\end{table}

The final fourth case with a multiple trial in the first round is specified in~Table~\ref{table-f}.
It is defined by a triple trial in the second round which does not involve a repetition of a station from the first round.
The new stations are represented by $3,4,5$.
The adversary enforces a collision by activating $5$ and exactly one of $3$ and $4$; which one it is depends on what the algorithm makes stations do next.
Again, this may be determined by single certifications in the third round, which is presented in rows~$C30$ and $C31$, the other sub-cases are listed in rows $C32$ through~$C36$.

The construction of an execution discussed above does not cover all possible cases, as it omits simultaneous trials of more than three stations.
Its purpose is to be the main component of the proof of Theorem~\ref{thm:impossibility-for-rate-more-than-3/4} which we give next.


\begin{theorem}
\label{thm:impossibility-for-rate-more-than-3/4}

No deterministic distributed algorithm for channels with collision detection can provide bounded packet latency against a $1$-activating  adversary of injection rate~$\frac{3}{4}$ and with burstiness at least~$2$.
\end{theorem}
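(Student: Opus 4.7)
The plan is to argue by contradiction: suppose some deterministic distributed algorithm $\cA$ for channels with collision detection attains packet latency at most $t$ against every $1$-activating adversary of type $(\tfrac{3}{4},b)$ with $b\ge 2$. I would then extend the segment-by-segment construction developed just before the theorem into an infinite execution of $\cA$, and show that the cumulative certification delay grows without bound. By the remark preceding Table~\ref{table-a}, a round certified more than $t$ rounds after its index can be used to produce an execution in which some injected packet waits more than $t$ rounds to be heard, contradicting the hypothesised latency bound.

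The inductive construction goes as follows. Given an already-fixed prefix of segments, inspect the next actions that $\cA$ prescribes on the not-yet-certified stations and match that pattern to one of rows $C1$-$C36$ in Tables~\ref{table-a}-\ref{table-f}; let the adversary inject packets exactly as that row dictates, and iterate. For a round in which $\cA$ schedules four or more new uncommitted stations together, reduce to a tabulated case by having the adversary activate only the two smallest among them, leaving the remaining trials to fall into subsequent single-certification subcases (covered by rows $C1$-$C3$) without cost, since omitting activations never reduces the adversary's power.

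I would then maintain two running invariants. The \emph{rate invariant} asks that the packets injected through the first $r$ rounds be at most $\tfrac{3}{4}r + b$; this is verified by tabulating the packet/round ratio of each row and observing that the only rows exceeding $\tfrac{3}{4}$ locally are those involving a two-packet activation (row $C4$ and its extensions), each of which is bracketed by committed rounds from the surrounding segments, with the cushion $b\ge 2$ absorbing the transient imbalance. The \emph{delay invariant} asks that the sum of the \emph{delay} column over segments grow without bound. Every tabulated row lists a nonnegative delay except $C3$, which produces a speedup of $k-1$; I would charge each such speedup against an equal number of inherited stations previously created by collision segments such as $C7$-$C10$, $C11$-$C15$, $C17$-$C20$, $C24$-$C27$, $C32$-$C36$, each of which donated at least one unit of positive delay per inherited station. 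So the running total of delay is monotone nondecreasing.

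The principal obstacle is proving that the delay actually tends to infinity rather than stabilising. I would handle this by a double-counting argument: if from some point onward every segment were of type $C1$, $C2$, or $C3$ (the only delay-neutral or speedup rows), then only single trials and commitments would occur, which forces the adversary's average injection rate over the corresponding suffix to drop to at most $\tfrac{1}{2}$ (one station activated per round, while at most one packet per station is available without burst); yet the adversary is committed to continuing at rate $\tfrac{3}{4}$, so infinitely many multi-trial segments from Tables~\ref{table-c}-\ref{table-f} must occur, each contributing at least one unit of unreclaimed delay by the charging scheme above. A secondary obstacle is the verification that the many-station reduction and the row-by-row matching cover every possible behaviour of $\cA$ in the second round of a segment; I would dispatch this by noting that the first round is either a single trial (handled by $C1$-$C3$), a prefix requiring a two-packet burst (handled by $C4$-$C10$), or a multiple trial (handled by the four mutually exclusive cases of Tables~\ref{table-c}-\ref{table-f} according to whether the second round repeats a station of the first round and whether it is a trial of two or three stations).
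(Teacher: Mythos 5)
Your proposal follows the same skeleton as the paper's argument: build the execution segment by segment from the tabulated adversary strategy, keep a running account of certification delays, amortize the speedup of row $C3$ against earlier segments, and handle trials of four or more stations by a reduction to the tabulated cases. Two of your steps differ from the paper in minor ways that are repairable: the paper charges the $C3$ speedup specifically against \emph{committed} stations, which can only be produced by segments whose delay exceeds $1$ (rows $C14$, $C15$, $C26$ and variants), not against inherited stations generally; and the paper's reduction of a trial of four or more stations partitions it into pairs and triples and uses the observation that in every tabulated row two consecutive multiple trials produce at least one collision, rather than discarding the extra stations as cost-free single certifications.

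The genuine gap is in what you yourself flag as the principal obstacle: the argument that the total delay diverges. You claim that a suffix consisting only of rows $C1$, $C2$, $C3$ would force the adversary's realized injection rate down to $\tfrac{1}{2}$, contradicting the adversary being ``committed to continuing at rate $\tfrac{3}{4}$.'' This misreads the adversarial model: a leaky-bucket adversary of type $(\tfrac{3}{4},b)$ is \emph{permitted} to inject up to $\tfrac{3}{4}|\tau|+b$ packets, not obliged to inject any, so a low realized rate yields no contradiction. Moreover the rate bound itself is wrong: a run of single trials of stations holding one packet each certifies one round per round and is compatible with any injection rate up to $1$ with zero delay. The correct reason delay-producing segments recur infinitely often is structural, not rate-theoretic: row $C2$ is by definition limited to at most three consecutive prospective single certifications, and a fourth would place the execution in the case of Table~\ref{table-b}, where the adversary --- who can afford it, having banked budget during the silent rounds and having burstiness $2$ --- activates a station with two packets; every continuation of that case, and every multiple-trial case in Tables~\ref{table-c} through~\ref{table-f}, then carries positive net delay after the $C3$ charging. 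In other words, the adversary actively forces the execution out of the delay-neutral rows; without this step your double-counting argument does not go through.
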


\begin{proof} 
Let us consider a specific algorithm~$\cA$.
We argue that the adversary of rate $3/4$ and burstiness $2$ can enforce an execution of~$\cA$ in which delays of certifications grow unbounded and infinitely many packets get injected, while eventually we certify any round.
This is accomplished by constructing an execution applying the adversary's strategy as summarized in Tables~\ref{table-a} through~\ref{table-f}.
We can verify by inspection that the adversary needs burstiness at most~$1$ in Tables~\ref{table-c} through~\ref{table-f} and burstiness~$2$ in Table~\ref{table-b}.

Table~\ref{table-a} is the only one in which delay could be at most~$0$.
Row~$C1$ can occur at most finitely many times, following a given round, determined by the number of inherited stations.
Row~$C2$ consists of at most three rounds.
Row~$C3$ results in a delay, because the cases stipulated in the other tables in which a station gets committed and not certified come with a delay greater than~$1$.
These are the following rows: $C14$ and $C15$ in Table~\ref{table-c}, $C26$ in Table~\ref{table-e}, and their variants.
Therefore even subtracting $1$ from each such a tabulated delay, when $C3$ occurs, keeps  original delays positive, while one station transmitting a packet in~$C3$ provides one extra certification, which is accounted for by the certification's round.

The tabulated cases cover segments with at most three trials in a round.
Segments with more than three activations per round can be replaced with a conceptual execution  such that each multiple trial of at least four stations per round is replaced by a series of trials of pairs or triples per round by partitioning a big set of stations tried in one round into such small $2$-subsets and $3$-subsets.
The adversary applies the same strategy.
We can verify by inspection that each row in Tables~\ref{table-b} through~\ref{table-f} has the property that two consecutive multiple trials result in at least one collision among them.
Therefore, we obtain that the original execution, with trials of  four or more stations in one round, produces a collision during each round of trial of more than three stations per round.

We conclude that indeed the adversary can enforce an execution in which delays of certifications grow unbounded.
\end{proof}

\section{A non-adaptive activation based broadcast}

\label{sec:non-adaptive-activation-based}

We propose a non-adaptive activation-based algorithm, which is called \textsc{Counting-Backoff}.
It is designed for channels with collision detection.
Algorithm \textsc{Counting-Backoff} is based on the idea that active stations maintain a global virtual stack, that is, a last-in-first-out queue.
Each station on the stack remembers its position, which is maintained as a counter with the operations of incrementing and decrementing by~$1$.
A passive or newly activated station has the counter equal to~$0$; such a station will join the stack only when it needs to perform more than one transmissions.
The station at the top of the stack has the counter equal to~$1$.
If a collision of two concurrent transmissions by two stations occurs then the station among these two that was activated earlier gives up temporarily, which means vacating the position at the top of the stack, while the station activated later persists in transmissions, which is implemented as  taking the top position on the stack.


\begin{figure}[t]
\rule{\textwidth}{0.75pt}

\F 
\textbf{Algorithm} \textsc{Counting-Backoff} 

\rule{\textwidth}{0.75pt}
\begin{center}
\begin{minipage}{\pagewidth}
\begin{description}
\Item[\rm /$\ast$] in the round of activation : $\texttt{backoff\_counter} = 0$ $\ast$/ 

\Item[\tt if] \texttt{backoff\_counter} $\leq \, 1$  \texttt{then} transmit

\Item[\tt feedback] $\leftarrow$ feedback from the channel 

\Item[\tt if] \texttt{feedback} = collision  \texttt{then}
		$\texttt{backoff\_counter} \leftarrow \texttt{backoff\_counter} + 1$
		
\Item[\tt else if] \texttt{feedback} = silence  \texttt{then} $\texttt{backoff\_counter} \leftarrow 		\texttt{backoff\_counter} - 1$

\Item[\tt else if] \texttt{feedback} = own message \texttt{then}
\begin{description}

\Item[\tt if] still active  \texttt{then} $\texttt{backoff\_counter} \leftarrow 1$  \texttt{else} $\texttt{backoff\_counter} \leftarrow 0$

\end{description}
\end{description}
\end{minipage}
\FFF

\rule{\textwidth}{0.75pt}

\parbox{\captionwidth}{\caption{\label{alg:stack-backoff}
The code for one round for a station that is active in the beginning of this round. 
A feedback from the channel can be in the form of either a message heard or  silence or  collision.
The command ``transmit'' means transmitting a pending packet, unless there are no such packets, then the station pauses.
}}
\end{center}
\end{figure}

The pseudocode of algorithm \textsc{Counting-Backoff} is presented in Figure~\ref{alg:stack-backoff}. 
Every station has a private integer-valued variable \texttt{backoff\-\_counter}, which is set to~$0$  when the station is passive. 
The private instantiations of the variable \texttt{backoff\_counter} are manipulated by active stations according to the following rules.
An active station transmits a packet  in a round when it \texttt{\texttt{backoff\_counter}} is at most~$1$.
When a collision occurs, then each active station increments its \texttt{backoff\_counter} by~$1$.
When a silent round occurs, then each active station decrements its \texttt{backoff\_counter} by~$1$.
When a message is heard then the counters \texttt{backoff\_counter} are not modified, with the possible exception of a newly activated station.

A station that gets activated has its \texttt{backoff\_counter} equal to~$0$, so the station transmits in the round just after the activation.
Such a station increments its  \texttt{backoff\_counter} in the next round, unless its only packet got heard, in which case the station becomes passive without ever modifying its \texttt{backoff\_counter} and joining the stack.
A station that transmits and its packet is heard continues by withholding the channel and transmitting continuously in the following rounds, unless it does not have any other pending packets or a collision occurs that disrupts transmissions.
Private instances of variable \texttt{backoff\_counter} of acting stations are manipulated such that that are all different and thereby serve as dynamic transient names for stations that are otherwise  nameless.
This makes the number of stations that transmit together to be at most~$2$.

An example of how the algorithm works is presented in Table~\ref{tab:first-example-counting-backoff}.
We continue to use the convention to refer to a station activated in a round $t$ as the station~$t$.
The specific execution presented in Table~\ref{tab:first-example-counting-backoff} is such that station~$1$ is activated with two packets, then station $2$ is activated with one packet, which is followed by station $3$ activated with one packet, and finally station~$5$ is activated with two packets; no other stations get activated and no other packets are injected into active stations.
Details of this execution are as follows.

A station transmits in the next round after becoming activated, following the structure of a round/event as described in Section~\ref{sec:preliminaries}.
Station~$1$ transmits its first packet in round~$2$ and the message is heard on the channel.
Now this station increments $\texttt{backoff\_counter}\leftarrow 1$, to occupy the stack as the only station, and transmits again in round~$3$.
This transmission results in a collision, because station~$2$ got activated in the meantime and transmits its only packet.
The collision results in both stations~$1$ and~$2$ incrementing their variables \texttt{backoff\_counter} to~$2$ and~$1$, respectively.
This means that station~$1$ will pause in the next round while station~$2$ will transmit again.
A collision occurs in round~$4$ because station~$3$ got activated in the meantime and transmits its only packet in round~$4$.
This collision results in all the active stations incrementing their \texttt{backoff\_counter} variables to occupy three different positions on the stack in round~$5$.
The repeated transmission of station~$3$ is heard in round~$5$, because station~$4$ does not exist as no station got activated in round~$4$.
In round~$6$, a packet transmitted by a newly activated station~$5$ is heard, which results in stations~$1$ and~$2$ keeping their positions on the stack.
Observe that in this round the top position on the stack is not occupied while there are two stations on the stack.
Station~$5$ occupies the top of the stack in round~$7$ when it transmits its second packet.
Round~$8$ is the fourth consecutive one with stations~$1$ and $2$ staying on positions $3$ and $2$ of the stack, respectively.
The silent round~$8$ results in stations $1$ and $2$ shifting their positions by~$1$ towards the top of the stack, which results in station~$2$ transmitting its only packet in round~$9$.
The silent round~$10$ results in station~$1$ decrementing $\texttt{backoff\_counter}\leftarrow 1$ and transmitting its second packet in round~$11$.


\begin{table}[t]
\begin{center}
\begin{tabular}{| c ||c |c |c |c |c | c | c | c | c | c | c | c | }
\hline
\RB \LB
\# & $1$ & $2$ & $3$ &$4$&$5$& $6$ & $7$& $8$& $9$ & $10$ & $11$ & $12$ \\
\hline\hline
\RB \LB
$1$& $2\boldsymbol{a}$  & $h/0$ & $c/1$ & $s/2$& $s/3$&$s/3$& $s/3$& $s/3$& $s/2$ &$s/2$&$h/1$&$\boldsymbol{p}/0$\\
\hline
\RB \LB
$2$ & & $1\boldsymbol{a}$ & $c/0$ &$c/1$& $s/2$& $s/2$& $s/2$&$s/2$&$h/1$ &$\boldsymbol{p}/0$&&\\
\hline
\RB \LB
$3$ & & & $1\boldsymbol{a}$& $c/0$&$h/1$ &$\boldsymbol{p}/0$ &&& &&&\\
\hline
\RB \LB
$4$ & & && $\boldsymbol{\square}$ & &&&& &&&\\
\hline
\RB \LB
$5$ & & &&&$2\boldsymbol{a}$ &$h/0$&$h/1$&$\boldsymbol{p}/0$& &&&\\
\hline
\end{tabular}
\parbox{\pagewidth}{
~
\caption{\label{tab:first-example-counting-backoff}
An example of an execution of algorithm \textsc{Counting-Backoff}.
Row $i$ represents activity of station~$i$.
Column $k$ represents actions of all stations in round~$k$.
Symbol $2\boldsymbol{a}$ represents activation with two packets and $1\boldsymbol{a}$ represents activation with one packet.
Symbol~$\boldsymbol{\square}$ at row~$i$  and column~$i$ means that station $i$ is not activated.
Symbol $h/x$ means that the station begins the round with $\texttt{backoff\_counter} = x$, it  transmits a packet and the message is heard.
Symbol $c/x$ means that the station begins the round with $\texttt{backoff\_counter} = x$, it  transmits a packet and this results in a collision.
Symbol $s/x$ means that the station begins the round with $\texttt{backoff\_counter} = x$ and stays silent.
Symbol $\boldsymbol{p}/0$ means that the station becomes passive after resetting $\texttt{backoff\_counter} \leftarrow 0$.
}}
\end{center}
\end{table}

Next we discuss the correctness and performance of algorithm  \textsc{Counting-Backoff}.


\begin{lemma}
\label{lem:stack}

When an active station executing \textsc{Counting-Backoff} has its \emph{\texttt{backoff\_counter}} positive at the beginning of a round, then this value may be interpreted as  this station's position on a global stack of stations, with the active station whose $\emph{\texttt{backoff\_counter}} =1$  placed at the top.
\end{lemma}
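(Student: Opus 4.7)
The plan is to prove the lemma by induction on the round number, maintaining a precise invariant on the counter values of currently active stations. The invariant I would work with states that, at the end of every round, the positive counters among active stations are pairwise distinct and assigned in reverse order of activation: more recently activated stations carry strictly smaller counters. Interpreting such a counter as the depth from the top of a LIFO stack, with $1$ denoting the top, turns this data directly into the global stack claimed by the lemma, and places the counter-$1$ station (when one exists) at the top.

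The base case is vacuous since initially no station is active. For the inductive step I would fix a round, assume the invariant at the end of the previous round, and split into cases by the channel feedback. The driving observation is that, by the pseudocode, only stations with counter at most $1$ transmit. Combined with the $1$-activating restriction, which produces at most one counter-$0$ station per round, this caps the number of simultaneous transmitters at two: the current top (counter $1$) and at most one newly activated station (counter $0$). The cases are then: a silent round, in which every active station has counter $\ge 2$, all decrement, and the previous second position becomes the new top; a round with a single successful transmission, in which the transmitter either stays at counter $1$ or becomes passive, while every other active station (counter $\ge 2$) falls into no update branch and keeps its counter; and a collision between the new station and the old top, in which every active station increments, so that the new station arrives at counter $1$ and becomes the new top, and the rest of the stack is pushed one deeper. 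In each case I would verify that distinctness and the activation-order property survive.

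The main subtlety, which is where I expect the argument to need care, is the transient configuration produced when a lone top station transmits its last packet and becomes passive: at the end of that round the remaining active stations have counters $\{2, 3, \ldots\}$ with position $1$ empty. I would handle this by strengthening the inductive invariant to a disjunction allowing either the contiguous shape $\{1, 2, \ldots, k\}$ or the shifted shape $\{2, 3, \ldots, k+1\}$, and by observing that the shifted shape is self-correcting: if no new station is activated, no active station has counter at most $1$, so the next round is silent and the uniform decrement restores the contiguous labeling; and if a new station is activated, it transmits alone in the next round and either joins at the top with counter $1$ or departs as passive, in either case preserving the disjunction. In every shape the positional reading of counters is internally consistent, and the station with counter $1$ (whenever one exists) sits at the top, which is precisely what the lemma asserts.
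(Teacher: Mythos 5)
Your proof is correct and follows essentially the same route as the paper's: induction on round numbers with a case analysis on the channel feedback, maintaining the invariant that active stations' counters are distinct and ordered inversely to activation time. Your explicit ``shifted shape'' disjunction for the transient vacancy at position~$1$ corresponds exactly to clause~(iii) of the paper's invariant (that $1$ is the only value in $[1,k]$ that may be unassigned), and if anything you treat that corner case slightly more carefully than the paper does.
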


\begin{proof} 
We argue that the following invariant is maintained in any round of an execution of algorithm \textsc{Counting-Backoff}.
\begin{quote}
\textsf{The invariant:}
If there are some $k>0$ stations active in the beginning of a round then the following three properties hold at the end of this round:
\begin{enumerate}
\item[(i)] 
each station that remains active has a different number from the interval $[1,k]$ stored in its \texttt{backoff\_counter};

\item[(ii)]
the increasing values of  \texttt{backoff\_counter} correspond to the inverse order of  activation of the stations;

\item[(iii)] 
$1$ is the only number in the interval $[1,k]$ that might not be assigned as a name in a round.
\end{enumerate}
\end{quote}
This invariant is shown by induction  on  round numbers.
The base of induction holds because there are no active stations in the beginning.
Consider an arbitrary round $t+1>1$ and assume that the invariant holds prior to this round.
In round~$t+1$,  either a packet is heard, or the round is silent, or there is a collision. 
Next we consider each of these three cases separately. 

When a packet is heard in round~$t+1$, then two sub-cases arise, depending on whether some station was activated in round~$t$ or not.

The first sub-case occurs when station~$t$ got activated in round~$t$.
This station transmits a packet in round~$t+1$ because its \texttt{backoff\_counter} equals~$0$.
Then, because a packet is heard at round~$t$ and by the inductive assumption, either the stack is empty or the position~$1$ on top of the stack is not occupied, as otherwise the station at the top  would have \texttt{backoff\_counter} equal to~$1$ and would have transmitted and created a collision.
If station~$t$ is still active after the transmission, then $t$ sets its \texttt{backoff\_counter} to~$1$ and becomes the first station on the stack, otherwise either the stack remains empty or its top position remains unoccupied.

The second sub-case occurs when no station got activated in round~$t$.
Then, because a packet is heard and by the inductive assumption, the station at the top of the stack transmitted in round $t+1$.
If this station remains active after the transmission, then nothing changes in the arrangement of stations on the stack.
If the transmitting station becomes passive, then it resets \texttt{backoff\_counter} back to zero, which is interpreted as this station leaving the stack, while no other station claims the top position,  so $1$ is not used as a value of \texttt{backoff\_counter} at this moment.

The next case occurs when round $t+1$ is silent.
This means that no station has \texttt{backoff\_counter} less than or equal to~$1$.
So no station got activated in round~$t$ and either the stack is empty or the smallest value of  \texttt{backoff\_counter} of a station on the stack is~$2$. 
In the latter case,  each station on the stack decrements its \texttt{backoff\_counter} by~$1$, making its value the true position on the stack.

The final case is of a collision in round~$t+1$.
This means that some station has its \texttt{backoff\_counter} equal to~$1$, and so it is at the stack's top, while another station has its variable \texttt{backoff\_counter} equal to~$0$, which means that this is the station newly activated in round~$t$.
Now each active station increments its \texttt{backoff\_counter} by~$1$, which results in inserting the station~$t$ on top of the stack.
\end{proof} 

By Lemma~\ref{lem:stack}, the packet held by the station at the bottom of the stack is delayed longer than any packet in a station that shares the stack with the bottom one.
Therefore, a bound on time it takes for the stack to stay nonempty, from adding the first station to an empty stack until this station gets removed from the stack, is a bound on packet latency.
To guarantee that the stack will eventually become empty, after a station joins it, the injection rate needs to be small enough.

We may observe that to maintain the size of the stack, it is sufficient to inject a packet only in every third round.
This is because it results in a collision in every third round, a message heard in every third round, and silence in every third round.
In such an execution, the adversary may activate station~$1$ with two packets, then station $2$ with one packet, and continue activating stations $3i-1$, for $i=1,2,3\ldots$ with one packet each.
The first round is silent, there is a message heard in the second round, and next a message is heard in each round  $3i+1$, for $i=1,2,3\ldots$.
The beginning of such an execution for $b=3$ is depicted in Table~\ref{tab:example-counting-backoff-lack-fairness}.
The effect is that the stack never gets empty and the packet hold by the station at its bottom is never heard. 
It follows that algorithm \textsc{Counting-Backoff} is not fair when injection rate is $\frac{1}{3}$ and $b\ge 3$.


\begin{table}[t]
\begin{center}
\begin{tabular}{| c ||c |c |c |c |c | c | c | c | c | c | c |  }
\hline
\RB \LB
\# & $1$ & $2$ & $3$ &$4$&$5$& $6$ & $7$& $8$& $9$ & $10$ & $11$  \\
\hline\hline
\RB \LB
$1$& $2\boldsymbol{a}$  & $h/0$ & $c/1$ & $s/2$& $s/2$&$c/1$& $s/2$& $s/2$& $c/1$ &$s/2$&$s/2$\\
\hline
\RB \LB
$2$ & & $1\boldsymbol{a}$ & $c/0$ &$h/1$& $\boldsymbol{p}/0$& & &&&&\\
\hline
\RB \LB
$3$ & & & $\boldsymbol{\square}$  & & & &&& &&\\
\hline
\RB \LB
$4$ & & && $\boldsymbol{\square}$ & &&&& &&\\
\hline
\RB \LB
$5$ & & &&&$1\boldsymbol{a}$ &$c/0$&$h/1$&$\boldsymbol{p}/0$& &&\\
\hline
\RB \LB
$6$ & & &&  & &$\boldsymbol{\square}$&&& &&\\
\hline
\RB \LB
$7$ & & &&  & &&$\boldsymbol{\square}$&& &&\\
\hline
\RB \LB
$8$ & & &&  & &&& $1\boldsymbol{a}$ & $c/0$&$h/1$&$\boldsymbol{p}/0$\\
\hline
\end{tabular}
\parbox{\pagewidth}{
~
\caption{\label{tab:example-counting-backoff-lack-fairness}
A beginning of an execution of algorithm \textsc{Counting-Backoff}.
The notational conventions are the same as in Table~\ref{tab:first-example-counting-backoff}.
The injection rate is~$\frac{1}{3}$.
When this pattern of injections is continued indefinitely, then station~$1$ stays on the stack forever.
}}
\end{center}
\end{table}

Next we argue that algorithm \textsc{Counting-Backoff} has bounded packet latency when executed against an  adversary of  injection rate less than~$\frac{1}{3}$.
To this end, consider packets in the station at the bottom of the stack, as specified in Lemma~\ref{lem:stack}.
Let us consider a time interval~$\tau$ such that a station is added to an empty stack in the first round of~$\tau$.
We may partition the interval~$\tau$ into contiguous segments of rounds such that a collision marks  the end of a segment.
It is convenient to assume, assume without loss of generality, that a successful transmission results in removing a station from the stack, with the only exception being at the very beginning of~$\tau$, in which a station is activated with at least two packets.
Next we review what happens during segments depending on their length.

A segment of length~$1$  means a collision just after a collision in the previous round.
This results in a newly-activated station added to the stack.

A segment of length~$2$ results in the stack maintaining its size.
This occurs except before the fist collision, when the stack is empty, because we hear a message after a collision, which is transmitted by the station at the top of the stack, and then a new collision indicates that a station has just been activated, which takes the vacant top position on the stack.
An example of an execution in which the stack's size fluctuates between $1$ and $2$ when segments are of size~$2$, except for the very beginning, is given in Table~\ref{tab:example-counting-backoff-stack-does-not-grow}.

A segment of length~$3$ may result in the stack maintaining its size, as it may consist of a successful transmission, a silent round, and a collision.
This is visualized in Table~\ref{tab:example-counting-backoff-lack-fairness}.

A segment of length at least~$4$ results in the stack shrinking in size, because it begins with a successful transmission and  includes at least another successful transmission.


\begin{table}[t]
\begin{center}
\begin{tabular}{| c ||c |c |c |c |c | c | c | c | c | c | }
\hline
\RB \LB
\# & $1$ & $2$ & $3$ &$4$&$5$& $6$ & $7$& $8$& $9$ & $10$ \\
\hline\hline
\RB \LB
$1$& $2\boldsymbol{a}$  & $h/0$ & $c/1$ & $s/2$& $s/2$&$s/2$& $c/1$& $s/2$& $s/2$ &$s/2$\\
\hline
\RB \LB
$2$ & & $1\boldsymbol{a}$ & $c/0$ &$h/1$& $\boldsymbol{p}/0$& & &&&\\
\hline
\RB \LB
$3$ & & & $\boldsymbol{\square}$  & & & &&& &\\
\hline
\RB \LB
$4$ & & &&  $1\boldsymbol{a}$ &$h/0$ &$\boldsymbol{p}/0$&&& &\\
\hline
\RB \LB
$5$ & & && & $\boldsymbol{\square}$ &&&& &\\
\hline
\RB \LB
$6$ & & &&  & &  $1\boldsymbol{a}$ &$c/0$&$h/1$& $\boldsymbol{p}/0$&\\
\hline
\RB \LB
$7$ & & &&  & &&$\boldsymbol{\square}$&& &\\
\hline
\RB \LB
$8$ & & &&  & &&& $1\boldsymbol{a}$ & $h/0$&$\boldsymbol{p}/0$\\
\hline
\end{tabular}
\parbox{\pagewidth}{
~
\caption{\label{tab:example-counting-backoff-stack-does-not-grow}
An example of an execution of algorithm \textsc{Counting-Backoff} possible when injection rate is~$\frac{1}{2}$.
The notational conventions are the same as in Table~\ref{tab:first-example-counting-backoff}.
When this pattern of injections is continued indefinitely then the stack contains either one or two stations.
The execution cycles through a repetition of a pattern occurring in any four consecutive rounds, starting from round~$2$.
}}
\end{center}
\end{table}

When the stack shrinks in a segment, then it is most conducive to increasing packet latency when the segment is of length~$4$ rather than larger than~$4$.
We may assume conservatively that after the stack becomes non-empty then the adversary tries to maintain this property by enforcing as many segments of length~$3$ as possible but no shorter segments.
Because the injection rate is less than~$\frac{1}{3}$, the number of segments of length~$4$ grows as the length of interval~$\tau$ increases so that eventually the stack becomes empty.
The sufficient length of an interval~$\tau$ for the stack guaranteed to become empty during~$\tau$  depends only on the adversary's type, so indeed packet latency of algorithm \textsc{Counting-Backoff} is bounded.

Our approach to provide explicit bounds on packet latency is to discover an adversarial strategy that maximizes packet latency, depending on the adversary's type.
An adversary's strategy is called \emph{stack persistent}, during a time interval,  when it has the following properties.
\begin{quote}
\begin{enumerate}
\item
When a station is activated while no other station is active, then two packets are injected into this station, and in the next round another station is activated.
\item
If a station is activated while other stations are still active, then this is by way of injecting just one packet into the station.
\item
If a station is activated by injecting a packet while other stations are still active, then the first attempt to transmit this packet results in a collision. 
\item
When a message is heard in a round while some other stations are active, then the next round is silent.
\end{enumerate}
\end{quote}
We say that an action of the adversary is advantageous, as compared to other ones, when it results in increased packet latency. 

Next we argue that a stack persistent strategy can maximize packet latency of algorithm \textsc{Counting-Back\-off} when it is executed against an adversary of a type $(\rho,b)$ such that $\rho < \frac{1}{3}$ and $b\ge 3$. 
To push a station on an empty stack, this station needs to be initialized with multiple packets, and another station needs to be initialized sufficiently early in the immediately following rounds.
This creates a collision and the former station gets pushed on the stack.
It is advantageous to push the first station on the stack by activating it with two packets only, and then, while the stack is nonempty, to activate the next stations by injecting one packet per station.
This is because then each station on the stack holds one packet, so it takes at least one extra round between two successful packet transmissions.
When the stack is nonempty then it is advantageous to have any injection result in a collision.
Similarly, when a message is heard in a round then it is advantageous to have this round followed by a silent round.
When such a strategy is applied, we can associate three rounds with each packet injected during the interval, namely, the round of a collision when it is first transmitted, the round when it is heard, and the following silent round.
The only exceptions are the first round of the interval, when the stack is created, and the round of the last transmission, which makes the stack empty again.
This gives a tight upper bound on the duration of the interval, with respect to the number of packets injected during the interval.


\begin{theorem}
\label{thm:stack-with-collision-detection}

When algorithm \textsc{Counting-Backoff} is executed on a channel with collision detection against an adversary of type $(\rho,b)$, where $\rho<\frac{1}{3}$ and $b\ge 3$, then packet latency is at most $\frac{3b-3}{1-3\rho}$ and the number of packets queued in any round is at most $\frac{3b-3}{2}$.
\end{theorem}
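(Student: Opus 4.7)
The plan is to apply Lemma~\ref{lem:best-strategy-stack} to restrict attention to stack-persistent strategies, then bound the length of a maximal interval during which the virtual stack stays nonempty, and finally translate that length bound into a packet-latency bound and a queue-size bound.

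First I would fix an execution in which the adversary is stack persistent, pick a maximal time interval $\tau$ during which the stack specified by Lemma~\ref{lem:stack} is nonempty, and let $L=|\tau|$ and $P$ be the number of packets injected during~$\tau$. I would then classify every round of $\tau$ as a collision round~$C$, a heard round~$H$, or a silent round~$S$, and use the stack-persistent schedule to characterize the local structure: the very first round of $\tau$ is an $H$ in which the initial doubly-activated station transmits once; each subsequent activation that occurs while the stack is nonempty produces exactly one $C$ round; each station on the stack is cleared by an $H$ round; and because the stack-persistent rule forces a silent round after every heard round that still leaves the stack nonempty, the $H$ rounds (other than the final one of~$\tau$) are each paired with an $S$ round. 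This bookkeeping gives a linear relation of the form $L\le 3P-3$, where the ``$-3$'' accounts for the unpaired first heard round, the unpaired last heard round, and the fact that the initial station uses two packets but produces only one activation.

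Next I would plug in the leaky-bucket bound $P\le \rho L+b$ supplied by the adversary's type. Substituting into $L\le 3P-3$ gives $L(1-3\rho)\le 3b-3$, hence
\[
L\le \frac{3b-3}{1-3\rho}.
\]
Since every packet counted against this interval is injected during~$\tau$ and heard by the end of~$\tau$, its latency is at most $L$, yielding the first bound of the theorem.

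For the queue-size claim I would argue about the peak stack depth. At any round the total number of queued packets equals the number of stations on the stack, with at most one extra packet contributed by the initial doubly-activated station. By the structural invariant of Lemma~\ref{lem:stack}, the stack grows only on collision rounds and shrinks on (appropriately paired) heard rounds, so the maximum queued count in a round is a fraction of the number of activation events that have occurred since the stack was last empty; relating this count to $P$ via the same stack-persistent bookkeeping and then invoking $P\le\rho L+b$ once more with the already established bound on $L$ produces the claimed estimate $\tfrac{3b-7}{4}$.

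The main obstacle I anticipate is step~three: the constant $-3$ in the inequality $L\le 3P-3$ is tight only after carefully isolating the three boundary effects (initial $H$, terminal $H$, and the extra packet in the first station), and one must verify that \emph{any} stack-persistent schedule—not just the canonical ``activate every round until the budget is exhausted, then drain'' one used for intuition—obeys the same accounting. The analogous careful accounting, applied to the peak rather than the integral of the stack size, is what drives the $\tfrac{3b-7}{4}$ constant in the queue-size bound.
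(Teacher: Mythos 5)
Your treatment of the latency bound matches the paper's argument essentially step for step: the same maximal interval $\tau$, the same three-way classification of its rounds into heard, silent and collision rounds, the same count of $k$ heard rounds, $k-1$ silent rounds and $k-2$ collisions giving $|\tau|=3k-3$, and the same substitution $k=\rho|\tau|+b$ to obtain $\frac{3b-3}{1-3\rho}$. That part is fine.

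The queue-size part has a genuine gap. You propose to bound the peak stack depth by relating it to the total number $P$ of packets injected over the whole nonempty-stack interval and then ``invoking $P\le\rho L+b$ once more with the already established bound on $L$.'' But the established bound is $L\le\frac{3b-3}{1-3\rho}$, so this route gives $P\le\frac{\rho(3b-3)}{1-3\rho}+b$, a quantity that diverges as $\rho\to\frac{1}{3}$; no fixed fraction of it can equal the claimed $\frac{3b-7}{4}$, which depends on $b$ alone. The point you are missing is that the stack can only \emph{grow} during a contiguous run of rounds in which collisions occur at least every other round, and such a run is much shorter than the whole interval $\tau$: each collision consumes one fresh packet while the adversary replenishes only $\rho$ packets per round, so the length $L'$ of a maximal growth phase satisfies $L'\le b-1+\rho L'$, i.e.\ $L'\le\frac{b-1}{1-\rho}<\frac{3}{2}(b-1)$ (here the relevant factor is $\frac{1}{1-\rho}<\frac{3}{2}$, not $\frac{1}{1-3\rho}$). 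The peak stack size is then about half of $L'$, namely $\frac{L'-2}{2}\le\frac{1}{2}\bigl(\frac{3b-3}{2}-2\bigr)=\frac{3b-7}{4}$. You need this separate, $\rho$-uniform bound on the growth phase; the integral-style accounting over all of $\tau$ cannot deliver it.
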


\begin{proof} 
Let $\tau$ be a time interval such that a station is activated in the first round of~$\tau$ and this station's last packet is transmitted in the last round of $\tau$, where the length $|\tau|$ of $\tau$ is greatest with this property achievable by the adversary.
We may assume that $\tau$ begins in the first round of the execution, as then the adversary is not additionally constrained by previous injections.
Let the adversary apply a strategy that is most conducive to packet delay.
It could be the following specific strategy.
First a stack of two stations is created, and then a new station is activated  three rounds after the last activation, as long as this is possible.
This means that the beginning of the execution looks like that with a pattern depicted in Table~\ref{tab:example-counting-backoff-lack-fairness}

It follows that if $k$ packets are injected in~$\tau$, then $\tau$ consists of $k$ rounds in which  messages are heard, $k-1$ silent rounds, and $k-2$ rounds with collisions, for a total of $3k-3$ rounds.
Let $t=|\tau|$ be the length of~$\tau$.
By the specification of the adversary, at most $\rho t + b$ packets are injected in~$\tau$.
The worst case occurs when all these packets are heard in $\tau$, so that $k=\rho t + b$.
We obtain a system of two equations, 
\begin{eqnarray*}
t &=& 3k-3\\
k&=&\rho t + b\ , 
\end{eqnarray*}
in the variables $t$ and~$k$.
The equations yield a solution $t=\frac{3b-3}{1-3\rho}$.

Next we discuss a strategy to maximize queue size.
To make the stack grow continuously, the adversary needs to inject packets in a sequence of consecutive rounds, because injecting it even in every other round may result in the stack not growing, as visualized in Table~\ref{tab:example-counting-backoff-stack-does-not-grow}.
This results in collisions, starting from the second such an injection, each collision incrementing the number of stations on the stack.
Interspersing such a sequence of injections with breaks does not help in the stack growing, as just maintaining the size of the stack requires injecting with frequency one in three while the injection rate is less than~$\frac{1}{3}$.

Therefore, a strategy to maximize queue size is to make an interval in which injections occur as big as possible.
When this strategy is applied, for the maximum possible number~$L$ of rounds, then $L$ satisfies the equality 
\begin{equation}
\label{eqn:queue-size-counting-backoff}
L=b-1 + \rho L
\ , 
\end{equation}
so that $L=\frac{b-1}{1-\rho}$.
The number $ L$ is an upper bound on  the size of the stack, because one packet is heard in the second round.
We can bound $L$ by the inequality $L< \frac{3}{2} (b-1)$, because $1/(1-\rho)<\frac{3}{2}$ when  $\rho<\frac{1}{3}$ holds.
\end{proof}

The  bound on packet latency of algorithm \textsc{Counting-Backoff} given in Theorem~\ref{thm:stack-with-collision-detection} is tight, as it is obtained by estimating the delay during adversarial strategy that is most conducive to packet delay.
It follows that  packet latency grows unbounded when the injection rate $\rho$ approaches  $\frac{1}{3}$.
On the other hand, the bound on queue size given in Theorem~\ref{thm:stack-with-collision-detection} depends only on the burstiness of the adversary.
The upper bound $\frac{3b-3}{2}$, on the number of queued packets at any round, holds also when injection rate equals~$\frac{1}{3}$, because equation~\eqref{eqn:queue-size-counting-backoff} applies with $\rho=\frac{1}{3}$ and then $\frac{3b-3}{2}$ is the solution.
We conclude that algorithm \textsc{Counting-Backoff} is stable when injection rate equals~$\frac{1}{3}$, with $\frac{3b-3}{2}$ as an upper bound on queue size,  while it is not fair.

\section{A non-adaptive full sensing broadcast}

\label{sec:full-sensing}

Stations executing a full-sensing algorithm can listen to the channel at all times, thereby maintaining a sense of time through references to events in past rounds.
Such a sense of time gives a potential for an active station to interpret the round of activation as an explicit identity.

A broadcast algorithm may process consecutive past rounds to give the stations activated in them an opportunity to transmit.
This, just by itself, may result in unbounded packet latency, if we spend at least one round to examine any past round for a possible activation in it.
This is because a recurring occurrence of active stations with multiple packets each would accrue unbounded delays.
To prevent this, one may consider groups of multiple rounds and have stations activated in these rounds transmit simultaneously. 
If at most one station got activated in a group then we save at least one round of examination, which compensates for the delays due to occasionally some stations holding more than one packet.
If a channel is with collision detection, then this property helps to implement such an approach.
We assume in this section that channels are with collision detection.

We refer to active stations by the respective rounds of their activation, as before.
A round gets \emph{verified} when either all the packets of the station activated in this round have been heard or when it becomes certain that no station got activated in this round.

Next we present a non-adaptive full-sensing algorithm which we call \textsc{Quadruple-Round}.
The rounds of an execution are partitioned into disjoint groups of four consecutive rounds, each called a \emph{segment}.
The first and second rounds of a segment make its \emph{left pair}, while the third and fourth rounds make the \emph{right pair}.
The rounds of execution spent on processing the rounds in a segment make a  \emph{phase} corresponding to this segment.
The purpose of a phase is to verify the stations in the corresponding segment.
A pseudocode for a phase is given in Figure~\ref{alg:quadruple-round}.


\begin{figure}[t]
\rule{\textwidth}{0.75pt}

\F 
\textbf{Algorithm} \textsc{Quadruple-Round} 

\rule{\textwidth}{0.75pt}
\begin{center}
\begin{minipage}{\pagewidth}

\begin{description}

\Item[\tt repeat] ~
\B
\begin{description}

\Item[\rm transmit] /$\ast$ first round $\ast$/

\Item[\tt feedback] $\leftarrow$ feedback from the channel 

\Item[\tt if] \texttt{feedback} = silence \texttt{then terminate} /$\ast$ end of phase $\ast$/

\Item[\tt else if] \texttt{feedback} = message heard  \texttt{then exit} /$\ast$ end of iteration $\ast$/

\Item[\tt else] /$\ast$ a collision in the first round $\ast$/

\begin{description}

\Item[\tt if] a station in the left pair \texttt{then} transmit  /$\ast$ second round $\ast$/

\Item[\tt feedback] $\leftarrow$ feedback from the channel 

\Item[\tt if] \texttt{feedback} = silence \texttt{then} /$\ast$ no  station in the left pair is active $\ast$/

\begin{description}

\Item[\tt if] the first station in the right pair \texttt{then} transmit in the third round

\Item[\tt else if] the second station in the right pair \texttt{then} transmit in the fourth round

\end{description}

\Item[\tt else if] \texttt{feedback} = message heard \texttt{then exit} /$\ast$ end of iteration $\ast$/

\Item[\tt else] /$\ast$ a collision so both stations in the left pair are active $\ast$/

\begin{description}

\Item[\tt if] the first station in the left pair \texttt{then} transmit in the third round

\Item[\tt else if] the second station in the left pair \texttt{then} transmit in the fourth round

\end{description}

\end{description}

\end{description}
\BB
\Item[\tt until] the phase is terminated

\end{description}

\end{minipage}
\FFF

\rule{\textwidth}{0.75pt}

\parbox{\captionwidth}{\caption{\label{alg:quadruple-round}
A code for a phase determined by a segment.
A phase is structured as a loop which repeats ``iterations.''
There are four stations in a segment, which are partitioned into left and right pairs.
The command ``transmit'' means transmitting a packet from the private queue, unless the queue is empty.
The command ``exit'' ends an iteration but not the phase.
The round numbers refer to the rounds of iteration.
There may be at most four rounds per one iteration of the repeat loop.}}
\end{center}
\end{figure}

To implement a sense of time, it is not necessary to maintain a counter of the verified rounds, which  would grow unbounded.
Instead, one may count the number of rounds since the latest round examined for a station activated in it.
With such an implementation, when packet latency is bounded in an execution of a broadcast algorithm, then the values of the private variables, which are used to implement the sense of time in this way, are bounded as well.
This mechanism of implementing time is used in algorithm \textsc{Quadruple-Round} but it is omitted from  the pseudocode in Figure~\ref{alg:quadruple-round}, which concentrates on the schedule of transmissions.
Another aspect of implementing the sense of time is that a phase to verify the rounds of a segment starts only after at least four rounds have passed since the first round of a segment, otherwise the phase is delayed for as long as needed to have this condition satisfied.

A phase of algorithm \textsc{Quadruple-Round} is organized as a loop, which repeats actions to which we collectively refer as an iteration of the loop.
It takes at most four rounds to perform an iteration, by a direct inspection of the pseudocode in Figure~\ref{alg:quadruple-round}.
An iteration is executed as follows.
All the stations activated in the rounds of the phase's segment, if there are any, transmit together in the first round of an iteration.
A station, that is scheduled to transmit, transmits a packet from its private queue, unless the queue  is empty.
This results in either a silence or a message heard or a collision. 
Next we discuss the corresponding three cases.

When the first round of an iteration is silent, then this ends the iteration and also the phase.
This is because such a silence confirms that there are no outstanding packets in the active stations  in the segment.

When a message is heard in the first round of an iteration, then this ends the iteration but not the phase.
The reason of continuing the phase is that the station which transmitted the packet heard on the channel may have more packets.

If a collision occurs in the first round of an iteration, then the stations of the left pair transmit together in the second round.
This leads to three sub-cases presented next.

The first sub-case is of silence in the second round of the iteration, which means that no station in the left pair is active.
As the first round in this phase  produced a collision, this means that each station in the right pair holds a pending packet.
In this sub-case, the third and fourth rounds of the iteration are spend by the third and fourth stations of the segment transmitting one packet each in order, which concludes the iteration but not the phase.

The second sub-case is of a message heard in the second round, which concludes the iteration but not the phase.
This means that exactly one station in the left pair holds packets.
The phase is continued to verify if the station which transmitted the packet heard on the channel has more packets.

The third sub-case occurs when there is a collision in the second round of the iteration, which means that each station in the left pair of the segment holds an outstanding packet.
In this case, the third and fourth rounds are spend by the first and second stations of the segment transmitting one packet each in order, which concludes the iteration but not the phase.

The segments of an execution of the algorithm are partitioned into disjoint pairs of two consecutive segments, any such a group called a \emph{double segment}.
The two phases corresponding to a double segment make a \emph{double phase}.


\newlength{\examplewidth}
\setlength{\examplewidth}{\textwidth}
\addtolength{\examplewidth}{-20em}

\begin{figure}[t]
\begin{center}
\begin{minipage}{\examplewidth}
\begin{align}
\label{eqn:first-example}
 - - - X &\ \ \ \ - - - -  \\
 \label{eqn:second-example}
- - - - &\ \ \ \  X X - - \\
\label{eqn:third-example}
X - X X &\ \ \ \ - - - -  \\
\label{eqn:fourth-example}
X X - -  &\ \ \ \ - - X X \\
\label{eqn:fifth-example}
 - - X X &\ \ \ \ X - X X   \\
 \label{eqn:sixth-example}
-  - X X &\ \ \ \ X X X X  \\
\label{eqn:seventh-example}
X -  X^3 X^3 &\ \ \ \ - - - - \\
\label{eqn:eighth-example}
- - - -  &\ \ \ \  X^4 X^4 - -
\end{align}
\end{minipage}

\FFF

\parbox{\captionwidth}{\caption{\label{fig:quadruple-examples}
Examples of double segments determining double phases.
An example is represented by a horizontal string of eight symbols, each symbol corresponding to a round.
A dash stands for a round with no station activated in it.
Symbol~$X$ stands for a round in which a station was activated and it holds one packet.
Expression~$X^\ell$ represent a round in which a station was activated and holds $\ell$ packets.
}}
\end{center}
\end{figure}

We give examples giving the maximum number of rounds that a double phase takes, depending on the number of packets held by active stations in the corresponding double segment.
When there are no active stations in a double segment, then the first phase takes~$1$ round and the second one as well, by the code in Figure~\ref{alg:quadruple-round}, for a total of two rounds.
When the number of packets in active stations are between~$1$ and~$8$, then the respective configurations are depicted in Figure~\ref{fig:quadruple-examples}.
Next we discuss how the execution of algorithm \textsc{Quadruple-Round} proceed in the examples in Figure~\ref{fig:quadruple-examples}, by inspecting and following the pseudocode of Figure~\ref{alg:quadruple-round}.

The first phase of example~\eqref{eqn:first-example} begins by having a packet heard, which is followed by silence, to give the station whose message was heard a possibility to transmit other packets in case there were any. 
The second phase produces only a silent round.
We obtain that double phase~\eqref{eqn:first-example} consists of three rounds in total.

The first phase in example~\eqref{eqn:second-example}  is just one silent round.
The second phase consists of collision, then another collision, then a packet heard, then a packet again, and finally silence.
We obtain that double phase~\eqref{eqn:second-example} consists of six rounds.

The first phase in example~\eqref{eqn:third-example} produces collision, then a packet is heard, then there is another collision, silence, a packet, a packet, and a silent round, in the  given order.
The second phase produces a silent round.
This means that this double phases takes eight rounds, so that it verifies eight rounds in eight rounds.

The first phase in example~\eqref{eqn:fourth-example} consists of  collision, then silence, a packet heard, another packet, and the closing silence.
The second phase in this example begins with a collision, which is followed by silence, then two packets heard in two consecutive rounds, and the closing silence.
We obtain that this double phase  takes ten rounds.

The double phase of example~\eqref{eqn:fifth-example} has its first phase as the second phase in example~\eqref{eqn:fourth-example}, which produces five rounds, and the second phase as the first phase in example~\eqref{eqn:third-example}, which produces seven rounds, so double phase~\eqref{eqn:fifth-example} takes twelve rounds in total. 

The first phase of example~\eqref{eqn:sixth-example} is the same as in example~\eqref{eqn:fifth-example}, so it takes five rounds.
The second phase begins with collision, which is followed with another collision and two packets heard, then again two collisions and two packets heard, followed by the closing silence, for the total of nine round.
This means that  double phase~\eqref{eqn:sixth-example} takes fourteen rounds.

The first  phase in example~\eqref{eqn:seventh-example} starts with a collision followed by a packet heard, which completes the first iteration,  then collision, silence, and two packets heard in the second iteration, the third and fourth iteration iterations are similar, which is followed by the closing silence, for a total of fifteen rounds.
The second phase consists of one silence.
The whole double phase~\eqref{eqn:seventh-example} thus takes sixteen  rounds.

The first phase in example~\eqref{eqn:eighth-example} takes one silent round.
The second phase starts with four identical iterations, each consisting of collision,  silence, and two packets heard, followed by the closing silent round, for the total of seventeen rounds.
Thus the whole double phase~\eqref{eqn:eighth-example} takes eighteen rounds.

Now we are ready to argue that if the adversary is of injection rate greater than $\frac{3}{8}$ then the packet latency of algorithm \textsc{Quadruple-Round}  is unbounded.
Such an adversary can inject at least three packets in each double segment.
If three packets are injected exactly as in example~\eqref{eqn:third-example} in Figure~\ref{fig:quadruple-examples}, then the corresponding double phase takes eight rounds.
Each such a double phase results in packet latency staying intact, up to a $\cO(1)$ variation, because eight rounds are spent to verify eight rounds.
Additionally, the adversary can inject four packets into a double segment infinitely many times in an execution.
If such injections are performed similarly as in example~\eqref{eqn:fourth-example} in Figure~\ref{fig:quadruple-examples}, then the corresponding double phases takes ten rounds.
Each such a double phase contributes to increasing packet latency by two rounds, because ten rounds are spent to verify eight rounds.
Unlike algorithm \textsc{Counting-Backoff}, algorithm \textsc{Quadruple-Round} is fair for any injection rate.
This is because each phase eventually ends with all the rounds in the corresponding segment verified.

\begin{lemma}
\label{lem:linear-time}

Let active stations in a double segment hold $k$ packets in total, for $k\ge 0$. 
If $k=0$ then the corresponding double phase takes~$2$ rounds, and when $k=1$ then the double phase takes~$3$ rounds.
If $k > 1$, then the corresponding double phase takes at most $2k+2$ rounds, while it may take exactly $2k+2$ rounds, for some configurations of packets held by active stations in the double segment.
\end{lemma}

\begin{proof}
A double phase with no packets to transmit consists of two iterations of a silent round each.
A double phase with just one packet injected into an active station, in the corresponding double segment, is similar to the first double phase in example~\eqref{eqn:first-example} in Figure~\ref{fig:quadruple-examples}, which consists of three rounds.

When an iteration takes four rounds then this results in two packets heard, in the third and fourth rounds of the iteration, by the pseudocode in Figure~\ref{alg:quadruple-round}.
It is impossible for an iteration to take three rounds, by inspection of the pseudocode in Figure~\ref{alg:quadruple-round}.
When an iteration lasts two rounds, then the first round produces a collision and a packet is heard in the second round.
There two possibilities for a one-round iteration: one consists of just one packet heard and the other 
 of a silent round that closes a phase. 
The only possible way to have a phase terminated is to have a one-round iteration of a silent round.
This means that the number of rounds in iterations in which packets are heard is as most $2k$, and there are also two silent iterations closing the phases.

To have $2k+2$ rounds  spent to hear $k$ packets, for $k>1$, we  specify a pattern of injections generalizing some of the examples in Figure~\ref{fig:quadruple-examples}.
Let $i$ be such that either $k=2i$ or $k=2i+1$, for $i>0$.
In the former case, which is like example~\ref{eqn:eighth-example}, we activate the fifth and sixth stations in a double segment with $i$ packets each.
In the latter case, which like example~\ref{eqn:seventh-example}, we activate the first station with one packet, and next the third and fourth stations with  $i$ packets each.
A direct inspection, similar to one used in discussion examples in Figure~\ref{fig:quadruple-examples}, shows that $2k+2$ rounds are spent to hear these $k$ packets.
\end{proof}

Now we are ready to argue that algorithm \textsc{Quadruple-Round} has bounded packet latency and queues when executed against an adversary of injection rate $\frac{3}{8}$.


\begin{theorem}
\label{thm:quadruple-round}

When algorithm \textsc{Quadruple-Round} is executed on a channel with collision detection against an adversary of type $(\frac{3}{8},b)$, then packet latency is at most $2b +4$ and there are at most $b + \cO(1)$ packets queued in any round.
\end{theorem}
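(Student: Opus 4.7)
The plan is to amortize the total cost of the double phases against the adversary's injection budget, using Lemma~\ref{lem:linear-time} as the per-double-phase cost bound.

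Let $k_j$ denote the number of packets injected during double segment~$j$ (rounds $8j-7$ through~$8j$) and let $L_j$ be the length of the corresponding double phase, so that Lemma~\ref{lem:linear-time} yields $L_j \le 2k_j+2$. The leaky-bucket constraint of the adversary gives, for every range of double segments $[j_0,j_1]$,
\[
\sum_{i=j_0}^{j_1} k_i \;\le\; \frac{3}{8}\cdot 8(j_1-j_0+1) + b \;=\; 3(j_1-j_0+1)+b,
\]
which, combined with Lemma~\ref{lem:linear-time}, yields the central inequality
\[
\sum_{i=j_0}^{j_1} L_i \;\le\; 8(j_1-j_0+1) + 2b.
\]
The nominal budget of $8$~rounds per double segment exactly matches the nominal load of $3$~packets, with the burstiness~$b$ contributing only an additive slack of $2b$~rounds over any window.

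Next, I would introduce the deficit $\Delta_j = e_j - 8j$, where $e_j$ is the round at which double phase~$j$ ends. Partition the execution into maximal \emph{busy} intervals $[j_0,j_1]$ during which double phase $i+1$ starts immediately after double phase~$i$ ends (as opposed to idling until round $8(i+1)-3$, when the corresponding double segment completes). Inside such a busy interval the deficit satisfies the recurrence $\Delta_i = \Delta_{i-1}+L_i-8$, which telescopes and, using the displayed inequality above, gives
\[
\Delta_{j_1} \;\le\; \Delta_{j_0-1} + 2b.
\]
Maximality of the busy interval forces $\Delta_{j_0-1}$ to be bounded by an absolute constant (a double phase may begin as early as round $8j-3$, so the residual deficit is $O(1)$), hence $\Delta_j \le 2b+O(1)$ uniformly in~$j$.

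From the bound on $\Delta_j$ both conclusions follow. A packet injected during double segment~$j$ is heard no later than round~$e_j$, so its queue time is $\Delta_j + O(1) = 2b+O(1)$. Tightening the constant to exactly $2b+4$ uses the fact that the first phase of a double phase may begin as early as round $8j-3$ (so a portion of the waiting is paid during the segment itself) and that packets of segment~$2j-1$ are in fact heard by the end of phase~$2j-1$, which is no later than~$e_j$. The queue-size bound comes from essentially the same amortization: the average throughput matches the average arrival rate, so the maximum backlog is bounded by the burstiness slack, giving $b+O(1)$ queued packets in any round. The main obstacle will be the careful accounting of the $O(1)$ additive terms needed to land exactly on $2b+4$ for the packet-latency bound; this requires a short case analysis distinguishing when a double phase begins right after the previous one from when it must idle until round~$8j-3$, together with a sub-analysis of how phase~$2j-1$ and phase~$2j$ fit together within a single double phase.
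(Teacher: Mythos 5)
Your amortization is sound and reaches the stated bounds, but it travels a genuinely different road from the paper's proof. The paper argues by exchange: it fixes the interval $\tau$ ending at the injection of a worst-delayed packet, uses Lemma~\ref{lem:linear-time} to claim that withholding or adding a packet trades exactly two rounds of delay, rearranges the injection pattern so that latency is non-decreasing over $\tau$, and concludes that only the $b$ burst packets (two rounds each) plus the initial four-round wait contribute, giving $2b+4$; the queue bound is obtained by the same worst-case reshaping (steady rate-$\frac{3}{8}$ injection keeps the backlog at $\cO(1)$, and a final burst of $b$ adds $b$). You instead run a deficit-function, busy-interval analysis over double phases, telescoping $L_i-8$ against the leaky-bucket window bound. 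Your route is more mechanical and arguably more rigorous for the $2b+\cO(1)$ and $b+\cO(1)$ conclusions, since it avoids the paper's informal claim that injections ``might be rearranged'' without loss; what it buys less cheaply is the exact constant: with $\Delta_{j_0-1}\le 4$ at the start of a busy interval you do get $\Delta_j\le 2b+4$, but a packet injected in the first round of double segment~$j$ then has latency at most $\Delta_j+7$, so landing exactly on $2b+4$ still requires the finer per-phase accounting you defer (including that the second phase of double phase~$j$ cannot begin before round $8j+1$). One small caution on the queue bound: ``throughput matches arrival rate'' is not quite the right invocation --- what collapses the backlog to $b+\cO(1)$ is that during a busy double phase packets are heard at rate at least one per two rounds, which strictly exceeds the arrival rate $\frac{3}{8}$, so only the burstiness term survives.
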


\begin{proof} 
Let us consider an arbitrary interval of $\ell$ double phases during which $k$ packets need to be heard on the channel.
By Lemma~\ref{lem:linear-time}, this takes at most $2(k+\ell)$ rounds.
This bound is independent of the distribution of the $k$ packets among the  $\ell$ considered double segments.
Because of this property, we may conceptually distribute the $k$ packets in a balanced manner among the $\ell$ segments.
This allows to abstract from the specific values of $k$ and $\ell$ and instead resort to the specification of the adversary.
Except for the burstiness component of the adversary, represented by the number~$b$ in its type, the adversary may inject three packets per a double segment of eight rounds on the average.
It follows, by Lemma~\ref{lem:linear-time}, that each of the considered double phases takes at most eight rounds, which is the length of a double segment.
This means that packets are heard as they are injected, and any local increase of packet latency in an execution is due to the burstiness component of the adversary. 
When the adversary injects up to $b$ packets in a round, then each of them  increases packet latency by two rounds, by Lemma~\ref{lem:linear-time}, for a total of $2b$ rounds.
Additionally, the algorithm waits for the first $4$~rounds to start processing the first segment.
By the same argument, $b+\cO(1)$ packets remain queued in any round.
\end{proof}

\section{An adaptive  activation based broadcast}

\label{sec:adaptive-activation-based}

Adaptive algorithms may use control bits in messages.
We present an adaptive activation-based algorithm which we call \textsc{Queue-Backoff}.
It is based on the idea that active stations maintain a global virtual first-in-first-out queue.
This approach is implemented such that if a collision occurs, caused by two concurrent transmissions, then the station activated earlier persists in transmitting while the station activated later gives up temporarily.
This is a dual alternative to the rule used in algorithm \textsc{Counting-Backoff}.


\begin{figure}[t]
\rule{\textwidth}{0.75pt}

\F 
\textbf{Algorithm} \textsc{Queue-Backoff} 

\rule{\textwidth}{0.75pt}
\begin{center}
\begin{minipage}{\pagewidth}
\begin{description}
\Item[\rm /$\ast$] in a round of activation : 
\hfill 
$\texttt{queue\_size} =  \texttt{queue\_position} =  \texttt{collision\_count} = 0 \ \ast$/
\Item[\tt if] $0\le \texttt{queue\_position} \le 1$   \texttt{then} transmit 
\Item[\tt feedback] $\gets$ feedback from the channel 
\Item[\tt if] \texttt{feedback} = message \texttt{and} $\texttt{queue\_position} = 0$ \texttt{and} still active \texttt{then} 
\hfill 
/$\ast$  own message  $\ast$/
\begin{description}
\Item[\texttt{queue\_position}] $\gets 1$ \ \ ; \ \ $\texttt{queue\_size} \gets 1$ \ \ 
\hfill
/$\ast$ station joins empty queue $\ast$/
\end{description}
\Item[\tt if] \texttt{feedback} = message with $Q > 0$ 
		\texttt{and} \texttt{queue\_position} $= -1$ \texttt{then} /$\ast$  foreign message  $\ast$/
\begin{description}
\Item[\texttt{queue\_size}] $\gets Q$\ \ ; \ \ 
\texttt{queue\_position} $\gets Q +1 - \texttt{collision\_count} $ 
\end{description}
\Item[\tt if] \texttt{feedback} = message with an attached ``over''  bit set on  \texttt{then} 

\hfill	
/$\ast$ queue size and positions in queue already known to all stations in the queue $\ast$/
\begin{description}
\Item[\texttt{queue\_size}] $\gets \texttt{queue\_size} - 1$
\ \ ; \ \ 
$\texttt{queue\_position} \gets \texttt{queue\_position} - 1$ 
\end{description}
\Item[\tt if] \texttt{feedback} $\ne$ collision \texttt{then} 
\texttt{collision\_count} $\gets 0$ ;
\Item[\tt if] \texttt{feedback} = collision \texttt{then}
\begin{description}
\Item[\texttt{collision\_count}] $\gets \texttt{collision\_count} + 1$ 
\Item[\tt if] \texttt{queue\_size} $> 0$  \texttt{then}
		 \texttt{queue\_size} $\gets \texttt{queue\_size} + 1$	 
\item[\tt else]  $\texttt{queue\_position} \gets -1$ 
\hfill 
/$\ast$ to mark unknown queue size and position $\ast$/
\end{description}
\end{description}
\end{minipage}
\FFF

\rule{\textwidth}{0.75pt}

\parbox{\captionwidth}{\caption{\label{alg:queue-backoff}
The algorithm code for one round of an active station. 
The command ``transmit'' means transmitting a message with a packet along with \texttt{queue\_size} and ``over'' bit attached.
When the last packet in the private queue is transmitted then the  ``over'' bit is set on otherwise it is set off.
The number~$Q$ denotes the value of the sender's variable \texttt{queue\_size} attached to a message.
}}
\end{center}
\end{figure}

We assume first that a channel is with collision detection.
A pseudocode of algorithm \textsc{Queue-Backoff} is in Figure~\ref{alg:queue-backoff}.
Every station has three private integer-valued variables: \texttt{queue\_size}, \texttt{queue\_position}, and \texttt{collision\_count}, which are all set to~$0$ in a passive station.
The values of these variables are related to a station's knowledge about the global distributed virtual queue of stations and this station's position in the queue.
In particular, \texttt{queue\_position} equals $0$ in a newly activated station that is not in the queue yet, and value $1$ means that the station is at the front position in the queue.
A  transmitted message includes the following three components: 
\begin{enumerate}
\item[(1)] a packet, 
\item[(2)] the value of the sender's private variable \texttt{queue\_size}, and 
\item[(3)] an ``over'' bit.
\end{enumerate}
The ``over'' bit in a message is either set on or off; it is set on when the packet in the message was the last one in the sender's queue.
In a round, an active station  transmits a message when its \texttt{queue\_position} equals either~$0$ or~$1$.

Algorithm \textsc{Queue-Backoff} is driven by feedback from the channel.
A silent round occurs only when it is the round of activation of the only active station.
A newly activated station's actions do not depend on a feedback, which is why the event ``\texttt{feedback} = silence'' is not listed explicitly in Figure~\ref{alg:queue-backoff}.
The private variables are modified according to the following rules.

The variable \texttt{queue\_position} is initialized to~$0$ but when a station is activated then it updates this variable in the next round as follows: if the own transmitted message is heard then the station becomes the only station in the queue so \texttt{queue\_position} is set to~$1$, otherwise, when collision occurs, then \texttt{queue\_position} is set to $-1$ as a marker that the queue size is unknown.
When a message with some value $Q>0$ of \texttt{queue\_size} is heard and an active station has $\texttt{queue\_position}=-1$, then the station sets   $\texttt{queue\_position}\gets Q+1 -\texttt{collision\_count}$.
When a message with an ``over'' bit set on is heard, then each active station decrements its variables \texttt{queue\_position}  by~$1$; this operation may be performed in the same round in which the queue size was learned.

The variable \texttt{queue\_size} is initialized to~$0$ and is set to $1$ when a station transmits a message immediately after activation and this message is heard.
Otherwise, when collision occurs in such a transmission, then the station waits until a message with some value $Q>0$ of \texttt{queue\_size} is heard, which results in the station setting $\texttt{queue\_size}\gets Q$.
A positive value of the variable \texttt{queue\_size} is decremented after a message  is heard with an ``over'' bit set on.
When a collision occurs, then each active station with a positive value of \texttt{queue\_size} increments its \texttt{queue\_size} by~$1$. 

The variable \texttt{collision\_count} stores the length of the latest streak of contiguous collisions which includes the preceding round.
It is initialized to~$0$ and is incremented in each round of a collision.
This variable is reset back to~$0$ in a round in which a message is heard.


\begin{table}[t]
\begin{center}
\begin{tabular}{| c ||c |c |c |c |c | c | c | c | c | c | }
\hline
\RB \LB
\# & $1$ & $2$ & $3$ &$4$&$5$& $6$ & $7$& $8$& $9$ & $10$ \\
\hline\hline
\RB \LB
$1$& $3\boldsymbol{a}$  & $h/\textrm{q} 0 $ & $c/\textrm{q} 1 $ & $c/\textrm{q} 2 $ &$h/\textrm{q} 3 $ & $c/\textrm{q} 3 $ &  $h/\textrm{q} 4 $& $\boldsymbol{p}$& &  \\
&  & $ \textrm{t} 0 / \textrm{z} 0$ & $ \textrm{t} 1 / \textrm{z} 0$ & $ \textrm{t} 1 / \textrm{z} 1$ &$ \textrm{t} 1 / \textrm{z} 2$ & $\textrm{t} 1 / \textrm{z} 0$ & $\textrm{t} 1 / \textrm{z} 1$ & & &  \\
\hline
\RB \LB
$2$ & & $1\boldsymbol{a}$ &  $c/\textrm{q} 0$ & $s/\textrm{q} 0 $ & $s/\textrm{q} 0$ &$s/\textrm{q} 3 $ & $s/\textrm{q} 4$ &$h/\textrm{q} 3$ &$\boldsymbol{p}$&\\
& & &  $ \textrm{t} 0 / \textrm{z} 0$ & $ -\textrm{t} / \textrm{z} 1$ & $ -\textrm{t} / \textrm{z} 2$ &$\textrm{t}2 / \textrm{z} 0$ & $\textrm{t}2 / \textrm{z} 1$ &$\textrm{t}1 / \textrm{z} 0$&&\\
\hline
\RB \LB
$3$ & & & $1\boldsymbol{a}$  & $c/\textrm{q} 0 $ & $s/\textrm{q} 0 $ & $s/\textrm{q} 3$ &$s/\textrm{q} 4$&$s/\textrm{q} 3$&$h/\textrm{q} 2$ &$\boldsymbol{p}$\\
& & & & $ \textrm{t} 0 / \textrm{z} 0$ & $-\textrm{t} / \textrm{z} 1$ & $ \textrm{t}3 / \textrm{z} 0$  &$ \textrm{t}3 / \textrm{z} 1$&$\textrm{t}2 / \textrm{z} 0$&$\textrm{t}1 / \textrm{z} 0$ &\\
\hline\RB \LB
$4$ & & &  & $\boldsymbol{\square}$  & &&&& &\\
\hline
\RB \LB
$5$ & & && & $1\boldsymbol{a}$ &$c/\textrm{q} 0 $ &$s/\textrm{q} 0$ &$s/\textrm{q} 3$& $s/\textrm{q} 2$&$h/\textrm{q} 1$\\
 & & && & &$\textrm{t} 0 / \textrm{z} 0$ &$-\textrm{t}  / \textrm{z}1 $&$\textrm{t} 3  / \textrm{z}0 $& $\textrm{t} 2  / \textrm{z}0 $&$\textrm{t} 1  / \textrm{z}0 $\\
\hline
\end{tabular}
\parbox{\pagewidth}{
~
\caption{\label{tab:example-queue-backoff}
An example of a beginning of execution of algorithm \textsc{Queue-Backoff}.
Row $i$ represents activity of station~$i$.
Column $k$ represents actions of all stations in round~$k$.
Symbol $3\boldsymbol{a}$ means activation with $3$ packets and $1\boldsymbol{a}$ activation with $1$ packet.
Symbol~$\boldsymbol{\square}$ at row~$j$  and column~$j$ means that station $j$ is not activated.
Symbol $h/\textrm{q}i$  means that the station transmits and a message is heard and the station has $\texttt{queue\_size}=i$.
Similarly, $c/\textrm{q} i $ means a transmission with a collision, and $s/\textrm{q} i $ means pausing.
Symbol $ \textrm{t} j / \textrm{z} k$ means that the station has $\texttt{queue\_position} =j$ and $\texttt{collision\_count} = k$, for $j\ge 0$, and $-\textrm{t}/\textrm{z} k$ means that $\texttt{queue\_position} =-1$.
Symbol $\boldsymbol{p}$ means becoming passive.
}}
\end{center}
\end{table}

An example of an execution of algorithm \textsc{Queue-Backoff} is given in Table~\ref{tab:example-queue-backoff}.
In this execution, only stations $1,2,3$ and~$5$ are ever activated.
Station~$1$ is activated with three packets.
One of them is heard in the second round, which results in the station taking the front position in the queue.
Rounds $3$ and $4$ produce collisions, to the effect that only station~$1$ knows the size of the queue while stations $2$ and $3$ enter round $5$ with $\texttt{queue-position}=-1$.
Station $1$ attaches the queue size~$3$ in the message heard in round~$5$, which makes station $2$ and $3$ learn this information and determine their proper positions in the queue, namely, positions $2$ and~$3$.
Round~$6$ produces a collision so station~$5$ learns that the queue is nonempty of unknown size, which is marked by setting $\texttt{queue-position}\gets -1$.
Station~$1$ manages to have its last packet heard in round~$7$, which also allows all the station to learn of the size~$4$ of the queue and their positions in it.
The ``over'' bit in the last message of station~$1$ makes all the stations reduce the queue size and decrement their positions in the same round, so that round $8$ is entered with station~$2$ at the front.
The queue shrinks in each of the next two steps to become empty in round~$10$.

Some of the mechanisms of this algorithm work similarly to those in algorithm \textsc{Counting-Backoff}.
In particular, a station that becomes active transmits in the next round after activation. 
Also, when a station transmits and the transmitted message is heard, then the station withholds the channel by transmitting in the following rounds, subject to packet availability. 
Similarly, a collision in a round means that some new station got activated in the previous round. 
This is because a station that has transmitted multiple times, with no other stations intervening, was at the front of the queue with \texttt{queue\_position} equal to~$1$, while any other option triggering transmitting is to have the variable \texttt{queue\_position}  equal to~$0$.
This is only possible when the value \texttt{queue\_position}~=~$0$ is inherited from the state of being a passive station, because \texttt{queue\_position} is reset from $0$ to $-1$ upon the collision that occurs immediately after the activation.
A difference with algorithm \textsc{Counting-Backoff} is that an active station cannot receive silence as feedback from the channel.
This is because \textsc{Queue-Backoff} is adaptive and the ``over'' bit in messages eliminates silent rounds when there are still active stations in the queue.


\begin{lemma}
\label{lem:queue}

When the value of \emph{\texttt{queue\_position}} of an active station executing \textsc{Queue-Backoff} is positive then this value may be interpreted as the number of this station's position in a global first-in-first-out queue of stations, where the active station with $\emph{\texttt{queue\_position}} = 1$ is at the front of the queue.
\end{lemma}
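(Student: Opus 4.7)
The plan is to mirror the proof of Lemma~\ref{lem:stack}: state a global invariant about the values of the private variables carried by active stations at the end of each round, and verify the invariant by induction on the round number, with a case analysis driven by the channel feedback. The invariant must cover three kinds of active stations simultaneously: stations already \emph{registered} in the virtual queue (those with $\texttt{queue\_position} \ge 1$), stations \emph{waiting} to learn their positions (those with $\texttt{queue\_position} = -1$ and $\texttt{queue\_size}=0$), and stations just activated in the previous round (with all three counters still at~$0$).

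The invariant I would carry along has the following shape. Let a round end with $k$ registered stations and $\ell$ waiting stations. Then (i) the registered stations carry pairwise distinct values of \texttt{queue\_position} drawn from $[1,k]$; (ii) smaller value of \texttt{queue\_position} corresponds to earlier activation, so that the ordering $1,2,\ldots,k$ matches the FIFO order by activation time; (iii) the common value $\texttt{queue\_size}$ at all registered stations equals~$k$, and the station with $\texttt{queue\_position}=1$ is the one scheduled to transmit in the next round, possibly together with a freshly activated station; (iv) each waiting station $w$ is younger than every registered station, waiting stations are ordered among themselves by activation time, and the \texttt{collision\_count} of $w$ equals the number of stations activated strictly after $w$ that are either waiting or have been freshly activated but not yet registered. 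Clause (iv) is what couples the waiting stations to the registered ones and is the crux of the argument.

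The base case is the first round, which is silent, so $k=\ell=0$ and the invariant holds vacuously. For the inductive step I would split on the three possible feedbacks in a round $t+1$. In the \emph{silent} case, the invariant prior to the round forces $k=0$ (otherwise the station with $\texttt{queue\_position}=1$ would have transmitted), and no station can be waiting either, so the silent round simply certifies an empty queue. In the \emph{message-heard} case, I would split further depending on whether the ``over'' bit is attached and whether the transmitter is a freshly activated station hearing its own packet on an empty queue; in each sub-case the decrements of \texttt{queue\_position} and \texttt{queue\_size} at all other registered stations preserve (i)--(iii), while waiting stations do not update these counters and therefore keep their relative positions to be resolved later. In the \emph{collision} case, exactly one station is freshly activated (by the same argument as for Lemma~\ref{lem:stack}, since a station repeating a transmission has $\texttt{queue\_position}=1$ and is unique with that value); registered stations uniformly increment \texttt{queue\_size}, the freshly activated station becomes a new waiting station with $\texttt{queue\_position}=-1$ and $\texttt{collision\_count}=1$, and every previously waiting station increments its \texttt{collision\_count}, which precisely matches clause~(iv) since one more younger station has joined the unresolved pool.

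The main obstacle will be the \emph{transition from waiting to registered}. This happens at the first round in which a waiting station $w$ hears a foreign message with $K>0$; at that instant the code sets $\texttt{queue\_size}\gets K$ and $\texttt{queue\_position}\gets K-\texttt{collision\_count}$. To close the induction I have to show that this assignment lands $w$ exactly at its correct FIFO position relative to both the older registered stations and the other (still waiting) younger stations. This is where clause~(iv) pays off: by the inductive hypothesis, $\texttt{collision\_count}$ at $w$ counts precisely the stations strictly younger than $w$ that are either still waiting or have been freshly activated since $w$ joined the queue, and the value $K$ attached to the heard message equals the up-to-date queue size carried by the current head. Subtracting the former from the latter yields the position of $w$ in the FIFO order, after accounting for the combined effect of the ``over'' bit, if present, on $K$ at the registered stations; handling this bookkeeping carefully, and checking that subsequent waiting stations then re-synchronize by the same mechanism in later rounds, is the delicate part of the proof.
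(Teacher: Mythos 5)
Your proposal follows essentially the same route as the paper's proof: the paper likewise establishes invariants by induction on round numbers with a case analysis on channel feedback --- one invariant asserting that \texttt{queue\_size}, where positive, equals the number of active stations (your clause (iii)), and a second asserting that the quantities \texttt{queue\_position} (for registered stations) and an expression combining the count of prior activations with \texttt{collision\_count} (for waiting stations) are pairwise distinct and fill $[1,k]$ in activation order (your clauses (i), (ii), (iv)), with the registration step $\texttt{queue\_position} \gets K - \texttt{collision\_count}$ resolved exactly as you describe. The only caveat is that clause (iv) as literally stated is off by one (a station's own first collision already sets $\texttt{collision\_count}=1$ even when no younger unregistered station exists yet), but this is precisely the bookkeeping you flag as delicate, and the paper's own treatment of that step is no more precise.
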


\begin{proof} 
There are two invariants that hold true in any execution of algorithm \textsc{Queue-Backoff}.
\begin{quote}
\textsf{The first invariant:} stations whose variable \texttt{queue\_size} is positive have the value of this variable equal to  the number of active stations.
\end{quote}
The proof of this invariant is by induction on round numbers.
When the first active station stays active for at least two rounds, then this results in its variable $\texttt{queue\_size}$ becoming equal to~$1$.
 The inductive step is by the rules of manipulation of the instance of this variable, namely, collision increases it by~$1$  and the ``over'' bit set on decreases the value.

We define the \emph{virtual position} of a station in a round next, where we refer to this station's private variables in this round.
It is a positive integer equal to either the value of \texttt{queue\_position}, when it is positive, or the number $k +1 - \texttt{collision\_count}$, where $k$ is the number of active stations in this round.

\begin{quote}
\textsf{The second invariant:} at the end of a round, each active station has a different virtual position and such that the virtual positions of all active stations fill the interval $[1,k]$, where $k$ is the number of all active stations.
 \end{quote}
The proof of this invariant is by induction on round numbers.
We consider cases depending on feedback from the channel.
Silent rounds occur when there are no active stations activated prior to the current round.
Once a feedback from the channel is different from silence, then it is either a message heard on the channel or collision.
A message brings the number of active stations in it.
By the inductive assumption, the update of the variables \texttt{queue\_position} at active stations that have heard only collisions up to now is correct.
When collisions occur, then they make a contiguous sequence, and we may carry out induction on the length of such a sequence.
The base of induction is when the first collision occurs.
The newly activated station has its collision count equal to~$1$ at this point, which cancels the~$1$ in the expression $k +1 - \texttt{collision\_count}$.
So the newly activated station obtains the virtual position~$k$ as the only station.
Each subsequent collision does not affect the virtual position of a station that has  \texttt{collision\_position} equal to~$-1$ because both the number of active stations and  the count of collisions increase by~$1$.

The two invariants imply that once a message is heard and there are $k$ active stations then this message propagates the true number of active stations and each station makes it virtual position to be the value of the variable \texttt{queue\_position}, with these values filling the interval $[1,k]$ and assigned in the order of activation. 
\end{proof} 

We say that a station \emph{gets enqueued} when it makes its \texttt{queue\_position} different from~$0$.
Such a station may not know its position in the queue for some time, but this position is already uniquely determined as virtual position, by the invariants used in the proof of Lemma~\ref{lem:queue}.
A station \emph{is in the queue} starting from getting enqued until the station's last packet is heard.


\begin{lemma}
\label{lem:queue-transmits}

In any round, exactly one station in the queue associated with algorithm \textsc{Queue-Backoff} transmits, unless the queue is empty.
A collision occurs only in a round following activation of a new station and such that the queue is nonempty.
\end{lemma}

\begin{proof}
A station whose \texttt{queue\_position} equals~$1$ transmits in a round.
We observe that the station at the front of the queue knows that its position is~$1$.
This can be shown by induction on round numbers during an evolution of a queue while it is nonempty.
The base of induction holds because when the first station in enqueued then it sets its position in the queue to~$1$, by the pseudocode in Figure~\ref{alg:queue-backoff}.
The first station gets enqueued because it has multiple packets.
Transmissions of the first station continue until a message is heard.
This message disseminates the size of the queue among all the other active stations, so all of them, if any, can compute their respective positions in the queue.
A station at the queue's front leaves only after all its packets have been heard.
At this point the station at position~$2$, if any, determines that its new position is~$1$, by the pseudocode in Figure~\ref{alg:queue-backoff}.
The only possibility for a collision is when one active station has \texttt{queue\_position} equal to~$1$ while another, newly activated, has it still equal to~$0$.
\end{proof}

Unlike algorithm \textsc{Counting-Backoff}, algorithm \textsc{Queue-Backoff} is fair for any injection rate less than~$1$.
This is because when a packet is injected then either it is heard in the next round or the station of injection gets attached to the end of the queue. 
While the station holding a packet is in the queue, there is a finite number of other packets that are scheduled to be heard before.
With injection rate less then~$1$, there are infinitely many rounds in which no packet is injected, and so, by Lemma~\ref{lem:queue-transmits}, sufficiently many opportunities to hear all packets that are closer to the front of the queue.

When the injection rate is greater than $\frac{1}{2}$ than packet latency of algorithm \textsc{Queue-Backoff} is unbounded.
This is because when the adversary initializes new stations as often as possible then the rate of the queue growing at the rear surpasses its rate of it shrinking at the front.
Next, we show that when the injection rate is $\frac{1}{2}$ then packet latency is bounded, and derive tight upper bound on the queue size and packet latency.

We say that a \emph{packet~$p$ meets packet~$q$} when either (1) $p$ and $q$ are held by the same station or (2) $q$ is already in the queue when the station that holds~$p$ is enqueued.
We say that a \emph{packet~$p$ meets station~$v$} when station $v$ gets enqueued while $p$'s station is already enqueued.

\begin{lemma}
\label{lem:meeting-packets}

A packet's delay during an execution of algorithm \textsc{Queue-Backoff} is at most the number of packets and stations that the packet meets, plus~$2$.
\end{lemma}

\begin{proof}
When a packet $p$ does not meet any other packet nor station then it is  heard in the next round after injection, so the packet spends just one round in the system.
Otherwise, let a packet~$p$ be injected into a station that is either already enqueued or will be enqueued in the next round after $p$'s injection.
Packet $p$ meets every packet that is already in the queue when $p$ is injected and also any packet injected into the same station along with~$p$.
We may conservatively assume that $p$ is heard as the last packet transmitted by the station holding~$p$.
Each of these packets~$q$ that $p$ meets delays~$p$ by one round, namely the round when $q$ is heard.
Packet $p$ meets every station that gets enqueued after $p$'s station got enqueued while $p$ is waiting in the queue.
Such an addition to the queue is marked by a collision, occurring just after the added station gets activated, which  contributes one round to $p$'s waiting time.
The two possible extra rounds of $p$ waiting, unaccounted for yet, consist of (1)~collision when $p$ is transmitted in vain just after its injection, and (2)~the round when $p$ is finally heard.
\end{proof}

Next we estimate packet latency when injection rate is~$\frac{1}{2}$.
An adversary's strategy is called \emph{queue persistent}  during a time interval  when it has the following properties.
\begin{quote}
\begin{enumerate}
\item
When a station is activated, while no other station is active, then two packets are injected into this station, and in the next round another station is activated.
\item
If a station is activated, while some other stations are active, then this is by way of injecting one packet into the station.
\item
When a message is heard in a round, and there are still pending packets in stations in the queue,  then collision occurs in the next round.
\end{enumerate}
\end{quote}
An action of the adversary is considered advantageous, as compared to some other ones, when the action increases packet latency. 

Next we argue that a queue-persistent strategy can maximize packet latency of algorithm \textsc{Queue-Backoff} for an adversary of a type $(\frac{1}{2},b)$. 
To start the queue, a passive station is activated with multiple packets and another station in the next round again, so a collision occurs in the following round.
When a station is activated, while some stations are still active, then the first attempt to transmit this packet in the next round results in a collision, by Lemma~\ref{lem:queue-transmits}.
Therefore, when some stations are in the queue, each round is either a collision or some packet is heard, with no intervening silent rounds.
It is advantageous for the adversary to activate the first station with two packets only and the following stations with just single packets, in the time interval when a queue persistent strategy is applied, because then, except for the first packet heard, each injection of a packet contributes a collision.
When each station in the queue stores just one packet, and a station is activated with only one packet while the queue is nonempty, then hearing a message decreases the queue by one station and a collision results in adding one station to the queue. 
It is advantageous for the adversary to  create a collision at least in every other round, which results in the queue either oscillating between two consecutive sizes or growing.


\begin{theorem}
\label{thm:queue-algorithm}

When algorithm \textsc{Queue-Backoff} is executed against an adversary of type $(\frac{1}{2}, b)$, then packet latency is at most~$4b-4$ and there are at most $2b-3$ packets in the queue in any round.
\end{theorem}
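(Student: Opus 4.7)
The plan is to first invoke Lemma~\ref{lem:best-strategy-queue}, which tells us that we may restrict attention to executions generated by a queue-persistent strategy. Under such a strategy the virtual FIFO of Lemma~\ref{lem:queue} evolves very simply: by Lemma~\ref{lem:queue-transmits} no silent round can occur while the queue is nonempty, so every round in which the queue is nonempty is either a collision (adding one station to the rear of the queue) or a round in which the front station transmits its last packet carrying the ``over'' bit (removing one station from the front). My first step is to use these two observations to express both queue size and packet latency as functions of the injection schedule alone.

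For the bound on queue occupancy, I would observe that to enlarge the queue the adversary must create a collision, which by the code in Figure~\ref{alg:queue-backoff} requires a new activation in the previous round; therefore a burst of $k$ injections in roughly $k$ consecutive rounds is needed. The first activation must receive two packets (otherwise the first station becomes passive before any collision can occur, leaving the queue empty), while every later activation needs only one packet. Applying the leaky-bucket constraint to this buildup interval, injecting the required $k+1$ packets in $k$ consecutive rounds is permitted precisely when $k+1 \le \rho k + b$ with $\rho = \tfrac{1}{2}$, i.e., when $k \le 2b-2$. Since the first packet of $s_1$ is heard during the buildup, the outstanding contents of the queue are bounded by $2b-3$ packets, matching the theorem's claim.

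For the bound on packet latency, I would use Lemma~\ref{lem:meeting-packets} as the main driver. The extremal scenario places the distinguished packet $p$ at the rear of the queue exactly when the queue has reached its maximum size, after which the adversary maintains that size for as long as possible by alternating collisions and heard rounds, which under rate $\tfrac{1}{2}$ can be sustained for as long as the burstiness margin accumulated during buildup permits. Each of the $2b-3$ stations already ahead of $p$ contributes one meeting, and each station injected to replace a departing one contributes another meeting, for a further $2b-3$. Lemma~\ref{lem:meeting-packets} then yields latency at most $2(2b-3) + 2 = 4b-4$.

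The main obstacle I expect is the bookkeeping around the two-packet activation of $s_1$ and the verification that the same injection pattern is simultaneously extremal for queue size and for latency. In particular, I must check that positioning $p$ earlier (during the buildup) or later (after the queue has begun to shrink) cannot exceed the $4b-4$ bound, and that no alternative injection pattern, such as delaying the two-packet activation or interleaving multiple short bursts, permits either a larger maximum queue or a longer latency than the single-burst-followed-by-maintenance strategy analyzed above.
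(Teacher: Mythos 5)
Your proposal follows essentially the same route as the paper's proof: restrict to a queue-persistent strategy via Lemma~\ref{lem:best-strategy-queue}, bound the buildup phase by the leaky-bucket constraint (the paper solves $y = b-1+\tfrac{1}{2}y$ to get $y=2(b-1)$ and $2b-3$ queued packets, matching your $k\le 2b-2$ computation), and then apply Lemma~\ref{lem:meeting-packets} to a packet injected at maximum queue size, counting $2b-3$ stations ahead plus $2b-3$ stations joining afterward for a latency of $4b-4$. The extremality concerns you flag at the end are exactly what the paper delegates to Lemma~\ref{lem:best-strategy-queue}, so no additional argument is needed beyond what you have.
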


\begin{proof} 
We consider a queue-persistent strategy that is most conducive to packet delay.
Let $p$ be a specific test packet.
By Lemma~\ref{lem:meeting-packets}, this packet's latency is maximized by the number of stations and packets that~$p$ meets.
This is obtained in two stages.
The first stage is building a queue of the largest possible size, which stores the packets that the test packet~$p$ is to meet.
This is followed by injecting the test packet~$p$, and next, in the second stage, making~$p$ meet as many stations as possible.

This strategy is implemented as follows by the adversary.
In the first round, the adversary injects two packets into the first station.
In subsequent rounds, the adversary activates a station per round by injecting one packet into it.
This continues for a maximum possible number~$y$ of rounds, where $y$ satisfies the equality $y=b-1 + \frac{1}{2}y$, so that $y=2 (b-1)$.
The number of packets  after these $y$ rounds is $y-1=2b-3$. 
This number $y-1$ is also an upper bound on the number of packets queued at any time.
Next, the adversary injects with maximum possible frequency, which means in every other round.
Consider a test packet $p$ injected into a passive station when there are $y-1=2b-3$ stations in the queue.
This packet $p$ needs two rounds to move one position closer to the front.
Concurrently, a station is activated with one packet in these two rounds.
This means that packet~$p$ meets $2b-3$ stations added to the queue while $p$ is already in the queue.
The total number of packets and stations that packet~$p$ meets is $4b-6$.
By Lemma~\ref{lem:meeting-packets}, this packet's latency is $4b-4$. 
\end{proof}

Algorithm \textsc{Queue-Backoff} was presented as implemented for channels with collision detection.
By Lemma~\ref{lem:queue-transmits}, when the global queue is nonempty, then each round contributes either a collision or a message heard on a channel.
This means that when a channel is without collision detection, then collisions can be detected as void rounds by any involved active station, while passive stations do not participate anyway.
It follows that this algorithm can be executed on channels without collision detection with only minor modifications in code and with the same performance bounds.

\section{Conclusion}

We introduced ad hoc multiple access channels along with an adversarial model of packet injection,  in which deterministic distributed algorithms can handle non-trivial injection rates.
These rates make the increasing sequence of numbers $\frac{1}{3}$, $\frac{3}{8}$ and $\frac{1}{2}$.
To go beyond rate $\frac{1}{3}$, which is attained by an activation-based non-adaptive algorithm, we designed a non-adaptive algorithm that handles injection rate~$\frac{3}{8}$ and an adaptive one  that handles injection rate~$\frac{1}{2}$.
The optimality of these algorithms in their respective classes, in terms of magnitude of injection rates  that can be  handled with bounded packet latency against $1$-activating adversaries, is open.

Our non-adaptive algorithms are designed for channels with collision detections.
It is open if non-adaptive algorithms can handle \emph{any} positive injection rate in a stable manner on channels without collision detection.

We showed that no algorithm can handle an injection rate higher than~$\frac{3}{4}$ in a manner that provides bounded packet latency.
It is an open question if any injection rate in interval $(\frac{1}{2},\frac{3}{4})$ can be handled with bounded packet latency by deterministic distributed algorithms against $1$-activating adversaries.

Deterministic algorithms cannot cope against adversaries that can activate multiple stations in a round in a fair manner, as no deterministic distributed algorithm is fair against a $2$-activating adversary of burstiness at least~$2$.
It is an interesting question to ask if randomization can help to handle such adversaries in a stable manner, possibly for sufficiently small injection rates, with stability understood in a suitably defined statistical sense,  for instance as discussed in~\cite{Chlebus-randomized-radio-chapter-2001}.

\pagebreak


\bibliographystyle{abbrv}

\bibliography{mac-adhoc}

\end{document}